\newtheorem{theorem}{Theorem}[section]
\newtheorem{proposition}[theorem]{Proposition}
\newtheorem{corollary}[theorem]{Corollary}
\newenvironment{definition}[1][Definition]{\begin{trivlist}
\item[\hskip \labelsep {\bfseries #1}]}{\end{trivlist}}
\newenvironment{remark}[1][Remark]{\begin{trivlist}
\item[\hskip \labelsep {\bfseries #1}]}{\end{trivlist}}
\begin{document}

\title{Flag-based Control of Orbit Dynamics in Quantum Lindblad Systems}

\author{Patrick Rooney}
\email{darraghrooney@gmail.com}
\noaffiliation

\author{Anthony M. Bloch}
\email{abloch@umich.edu}
\affiliation{Department of Mathematics, University of
Michigan, Ann Arbor, MI 48109}

\author{C. Rangan}
\email{rangan@uwindsor.ca}
\affiliation{Department of Physics, University of Windsor, ON,
N9B 3P4, Canada}

\begin{abstract}
In this paper, we demonstrate that the dynamics of an $n$-dimensional Lindblad control system can be separated into its inter- and intra-orbit dynamics when there is fast controllability. This can be viewed as a control system on the simplex of density operator spectra, where the flag representing the eigenspaces is viewed as a control variable. The local controllability properties of this control system can be analyzed when the control-set of flags is limited to a finite subset. In particular, there is a natural finite subset of $n!$ flags that are effective for low-purity orbits.
\end{abstract}

\keywords{quantum control, open systems, Lindblad equation, decoherence, dissipation}

\maketitle

\section{Introduction}
Advances in quantum technologies, such as the nascent  progress in quantum computation \cite{Feynman82}\cite{NielsenChuangBook}\cite{RanganBucksbaum01}\cite{PalaoKosloff02}, as well as the developments of coherent control of chemical reactions  \cite{ShapiroBrumer86}\cite{TannorRice85} and NMR \cite{ErnstetalBook}, have resulted in great effort to apply mathematical control theory \cite{SontagBook} to quantum mechanical systems \cite{DAlessandroBook}. The interaction of a system with its environment is a major obstacle in quantum control, and as a result quantum control theory has expanded from closed systems \cite{HuangTarnClark83} to open systems (see \cite{MabuchiKhaneja2005}, \cite{BrifChakrabartiRabitz2010}, \cite{DongPetersen2011} and \cite{AltafiniTicozzi2012} for surveys). 

A common method of modeling open systems is to assume they are Markovian and time-independent, in which cases the dynamics are described by a quantum dynamical semi-group and the Lindblad master equation \cite{Lindblad76}\cite{GoriniKossakowskiSudarshan76}\cite{BreuerPetruccioneBook}. Typically, the control functions appear in the system Hamiltonian (although there has been progress in engineering Lindblad dynamics \cite{LloydViola01}\cite{Baconetal01}\cite{Barreiroetal2011}). This means that, absent the interaction with the environment, the controls are only capable of steering the system within a given unitary orbit \cite{TannorBartana99}\cite{SklarzTannorKhaneja04}\cite{Schirmeretal04}. The motion between orbits depends on the Lindblad super-operator. Consequently, the Hamiltonian cannot directly affect the eigenvalues, or the purity $Tr(\rho^2)$, since the eigenvalues of the density operator are constant on any orbit. If
the optimal time \cite{KhanejaGlaserBrockett02} between two unitarily equivalent density operators is much smaller than the time-scale characterized by the Lindblad dynamics, it becomes an interesting question as to how best position the system on any given orbit.

The aim of this paper is to formally consider an approach to control of open quantum systems in which the space of density matrices is decomposed into spectra (the set of possible orbits) and flags (the positions along a given orbit). If one has sufficiently fast and complete Hamiltonian control, the intra-orbit dynamics can be made arbitrarily faster than the inter-orbit dynamics, since the Lindblad super-operator is bounded. After separating the dynamics, we want to view the flag trajectory as a control function, and the spectrum as the state variables. We refer to this viewpoint as flag-based control. After a desired flag trajectory has been determined, we can consequently reconstruct the necessary Hamiltonian, which contains the true control functions. We are building on previous work on two-dimensional systems \cite{us_nis2_a}\cite{us_nis2_b}. The $n=2$ case is easier to study from a control perspective as the set of orbits is isomorphic to a closed line segment, and all orbits but one are isomorphic to a sphere. In order to generalize to $2<n<\infty$, one must address the delicacies of dealing with more complicated orbit sets, as well as cope with the difficulties that come with non-trivial control sets. Chapter 8 in reference \cite{BengtssonZyczkowskiBook} discusses the geometry of density matrices, and, in particular, their orbit sets. Our approach contrasts with the generalized Bloch vector representation approach \cite{Schirmeretal04}\cite{SchirmerWang2010}, which yields an affine differential equation on the vector space of density operators. This representation has little to do with the orbit structure however. 

One obstacle that arises in our approach is the non-linearity of the flag-set. The flag-set is always the quotient manifold $U(n)/(U(m_1)\times\cdots\times U(m_\alpha))$, where $m_\alpha$ is the multiplicity of the $\alpha$th eigenvalue of the density operator. It is therefore non-trivial to apply standard control theory results to a flag-based control system. In this paper, we demonstrate that a local controllability result can be applied when one limits the flag-controls to a finite subset of the flag-set. In particular, the behavior of the Lindblad operators at the completely mixed state yields a natural set of $n!$ flags that are particularly useful for low-purity orbits.

Infinite-dimensional quantum systems \cite{BlochBrockettRangan10} present many technical difficulties. In particular, the Lindblad super-operator is not necessarily bounded, which means it has no characteristic time-scale, and we cannot assume our unitary control is faster than the Lindblad dynamics. For this reason, we consider only finite-dimensional systems. Additionally, while there is considerable research in using feedback to control both closed \cite{MirrahimiRouchon09} and open \cite{JamesGough10}\cite{BoutenVanHandelJames09} quantum systems, we shall only consider the open-loop case, where there is no feedback.

In section II, we decompose the Lindblad master equation into its spectral and flag components, and in section III, we re-interpret the spectral ODE as a control equation. In section IV, we analyze the local controllability of finite flag control-sets, and in section V we show some examples.

\section{Separation of Spectral and Flag Dynamics}

A state in an $n$-dimensional  open quantum system is described by an operator $\rho$ on the $n$-dimensional Hilbert space, called the density operator. It must be positive semi-definite with unit trace. It can be written in terms of its eigenvalues: 
\begin{equation}
\rho = \sum_{\alpha=1}^{n_d} \lambda^d_\alpha P_\alpha, \label{decomp}
\end{equation}
where $\Lambda^d = \{\lambda^d_\alpha: \alpha = 1, \cdots, n_d\le n\}$ is the set of \emph{distinct} eigenvalues of $\rho$, and $\{P_\alpha\}$ are orthogonal projectors onto the corresponding eigenspaces\footnote{Henceforth, Greek indices will be used for summing over distinct eigenvalues, while Latin indices will be used for eigenvalues with multiplicity. Moreover, all projectors are assumed to be orthogonal.}. The properties of $\rho$ demand that all eigenvalues lie on the interval $[0,1]$ and $\sum_\alpha m_\alpha \lambda^d_\alpha= \sum_j\lambda_j = 1$, where $m_\alpha$ is the multiplicity of $\lambda^d_\alpha$.

The dynamics of a system with Lindblad dissipation is described by the Hamiltonian $H(t)$, which is a (possibly time-dependent) Hermitian operator, and a set of $N$ Lindblad operators $\{ L_k \}$ with the Lindblad equation:
\begin{align}
\frac{d}{dt}\rho(t) &= \mathcal{L}(\rho(t)) := [-iH(t), \rho(t)] + \mathcal{L}_D(\rho(t)) \label{lindblad}\\
\mathcal{L}_D(\rho) &:= \sum_{k=1}^N \left( L_k \rho L_k^\dagger - \frac{1}{2} \{L_k^\dagger L_k, \rho \} \right),
\end{align}
where the braces in the Lindblad superoperator $\mathcal{L}_D$ indicate an anti-commutator. 

We are interested in investigating and controlling how a system moves between unitary orbits. In the absence of Lindblad dissipation, the solution to (\ref{lindblad}) can be written $\rho(t) = U(t)\rho(0) U(t)^\dagger$ where $U(t)$ is a trajectory on the unitary group $U(n)$ obeying $\frac{d}{dt}U(t) = -iH(t)U(t)$. Since $U(t)$ is unitary, the eigenvalues of $\rho(t)$ are invariant under the Hamiltonian evolution. That is, if we define the unitary orbit $\mathcal{O}(\rho) := \{U\rho U^\dagger: U\in U(n) \}$, the system does not leave the orbit without the influence of $\mathcal{L}_D$. For simplicity, we will assume fast controllability on the orbit: we can write 
\begin{equation}
H(t) = H_0+ \sum_{i=1}^{n^2-1} u_i(t) H_i \label{controlham},
\end{equation}
where $\{H_i: i=1,2,\dots,n^2-1\}$ is a basis of $\mathfrak{su}(n)$, and the $\{u_j(t)\}$ are real-valued control functions that are unbounded and piecewise-continuous. The unboundedness is a key property: since $\mathcal{L}_D()$ is a bounded super-operator, motion along a unitary orbit can be made arbitrarily faster than motion between orbits. And because $\{H_i\}$ span the Lie algebra, any point on the orbit is reachable from any other.

We want to separate the dynamics of the eigenvalues from that of the projectors. We must make a distinction between three objects: the unordered set of $n_d$ distinct eigenvalues $\Lambda^d$, a vector $\Lambda$ of possibly repeated eigenvalues, and the unordered multiset\footnote{A \emph{multiset} is a set that may contain repeated elements} $\Lambda_Q$ of $n$ possibly repeated eigenvalues. The expression (\ref{decomp}) requires $\Lambda^d$. The space of unitary orbits however is in one-to-one correspondence with the space of multisets $\Lambda_Q$. Moreover, we want to write down a linear ODE for the eigenvalues, which necessitates using the vector $\Lambda$. 

We use the subscript $Q$ because $\Lambda_Q$ lives on a quotient space. $\Lambda$ exists on an $n$-simplex $\mathcal{T}\subset\mathbb{R}^n$, which has vertices $(1,0,\cdots,0)$, $(0,1,0,\cdots,0)$, $\cdots$, $(0,0,\cdots,0,1)$. $\Lambda_Q$ on the other hand lives on $\mathcal{T}_Q:=\mathcal{T}/S_n$, where $S_n$ is the symmetric group. Technically, $\mathcal{T}_Q$ is an orbifold with boundary: an orbifold is the quotient of a manifold with a finite group, in this case the group of eigenvalue-reorderings. Note that $\mathcal{T}_Q$ can be identified with another simplex, namely that with vertices $(1,0,\cdots,0)$, $(\frac{1}{2},\frac{1}{2},0,\cdots,0)$, $(\frac{1}{3},\frac{1}{3},\frac{1}{3},0,\cdots,0)$, $\dots$, $(\frac{1}{n},\frac{1}{n},\cdots,\frac{1}{n})$. This choice is not actually unique: $\mathcal{T}$ is essentially the union of $n!$ different subsimplices, each with a particular ordering. In the language of \textit{e.g.} \cite{BengtssonZyczkowskiBook}, $\mathcal{T}$ is the \emph{eigenvalue simplex}, and each subsimplex is a \textit{Weyl chamber}. We have chosen the Weyl chamber in which the eigenvalues are in non-increasing order. Let us call this chamber $\mathcal{T}_I$, and the other chambers $\mathcal{T}_\sigma$ corresponding to the permutations $\sigma \in S_n$.

These simplices are $(n-1)$-dimensional subsets of $\mathbb{R}^n$. It can be useful to project them onto $\mathbb{R}^{n-1}$. We consider a map $\mathcal{P}$:
\begin{align}
\bar{\mathcal{T}} &:=\mathcal{P}(\mathcal{T}) \subset\mathbb{R}^{n-1}\\
x &:= \mathcal{P}(\Lambda)\\
x_j &:= \frac{1}{\sqrt{j(j+1)}}\left(\sum_{i=1}^j \lambda_i -  j\lambda_{j+1}\right).
\end{align}
$\mathcal{P}$ is a linear map: let $\Pi$ be its corresponding $(n-1)\times n$ matrix, so that $x=\Pi\Lambda$. Let $\iota$ denote the $\Lambda$ corresponding to the completely mixed state: $\iota = \langle\frac{1}{n},\frac{1}{n},\cdots, \frac{1}{n}\rangle$. One can check the following identities: $\Pi\iota = 0$, $\Pi\Pi^T = I_{n-1}$, $\Pi^T \Pi = I_n - n \iota\iota^T$ and $\iota^T\Lambda = \frac{1}{n}\iota$. Using these identities we can see that $\Lambda = \iota + \Pi^T x$, and also that $\mathcal{P}$ is an isometry\footnote{That is, $||\Pi(\Lambda_1-\Lambda_2)|| = ||\Lambda_1-\Lambda_2||$ for any $\Lambda_1$, $\Lambda_2$.}. Therefore, $\bar{\mathcal{T}}$ is an $n$-simplex with the same side-length $\sqrt{2}$ as $\mathcal{T}$. It is also centered at the origin. Note that the $n$ faces of $\mathcal{T}$ correspond to eigenvalue zeroes $\lambda_j=0$, but this is not the case for $\mathcal{T}_I$. One of its faces corresponds to the lowest eigenvalue vanishing: $\lambda_n=0$. The remaining $n-1$ faces correspond to eigenvalue crossings $\lambda_j=\lambda_{j+1}$. 

Fig. \ref{fignew} shows $\bar{\mathcal{T}}$ for $n=3$ (top) and $n=4$ (bottom). For $n=3$, there are six Weyl chambers, and the highlighted chamber is $\bar{\mathcal{T}}_I$. The central point corresponds to the completely mixed state, the three outer vertices correspond to the orbit of pure states, and the three remaining points correspond to the orbit $\Lambda_Q=\{\frac{1}{2},\frac{1}{2},0\}$. There are three boundary edges corresponding to $\lambda_j=0$, and three inner edges corresponding to $\lambda_j=\lambda_k$, $j\ne k$.

For $n=4$, there are $24$ Weyl chambers and $\bar{\mathcal{T}}_I$ is shown in dark grey. There are six inner faces corresponding to $\lambda_j=\lambda_k$, $j\ne k$, and we have shown one in light grey. There are four outer faces corresponding to $\lambda_j=0$. In total, there are twenty-five edges of interest (many not shown). We have highlighted three: an outer edge $\lambda_3 = \lambda_4 = 0$, an inner edge $\lambda_1=\lambda_4$, $\lambda_2=\lambda_3$, and an edge inside an outer face $\lambda_2=0$, $\lambda_3=\lambda_4$. 
 
\begin{figure}
\includegraphics[width=\columnwidth]{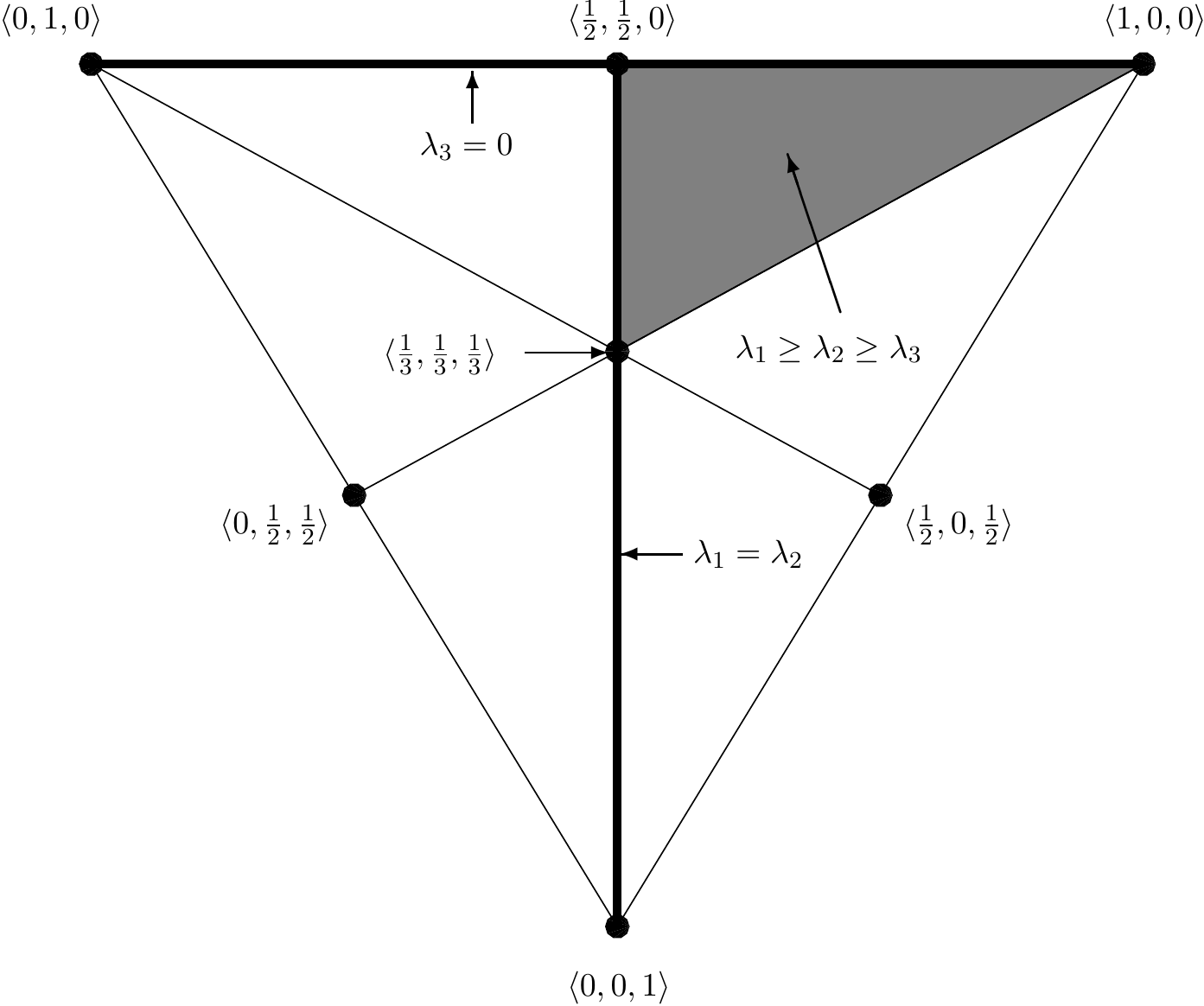}
\includegraphics[width=\columnwidth]{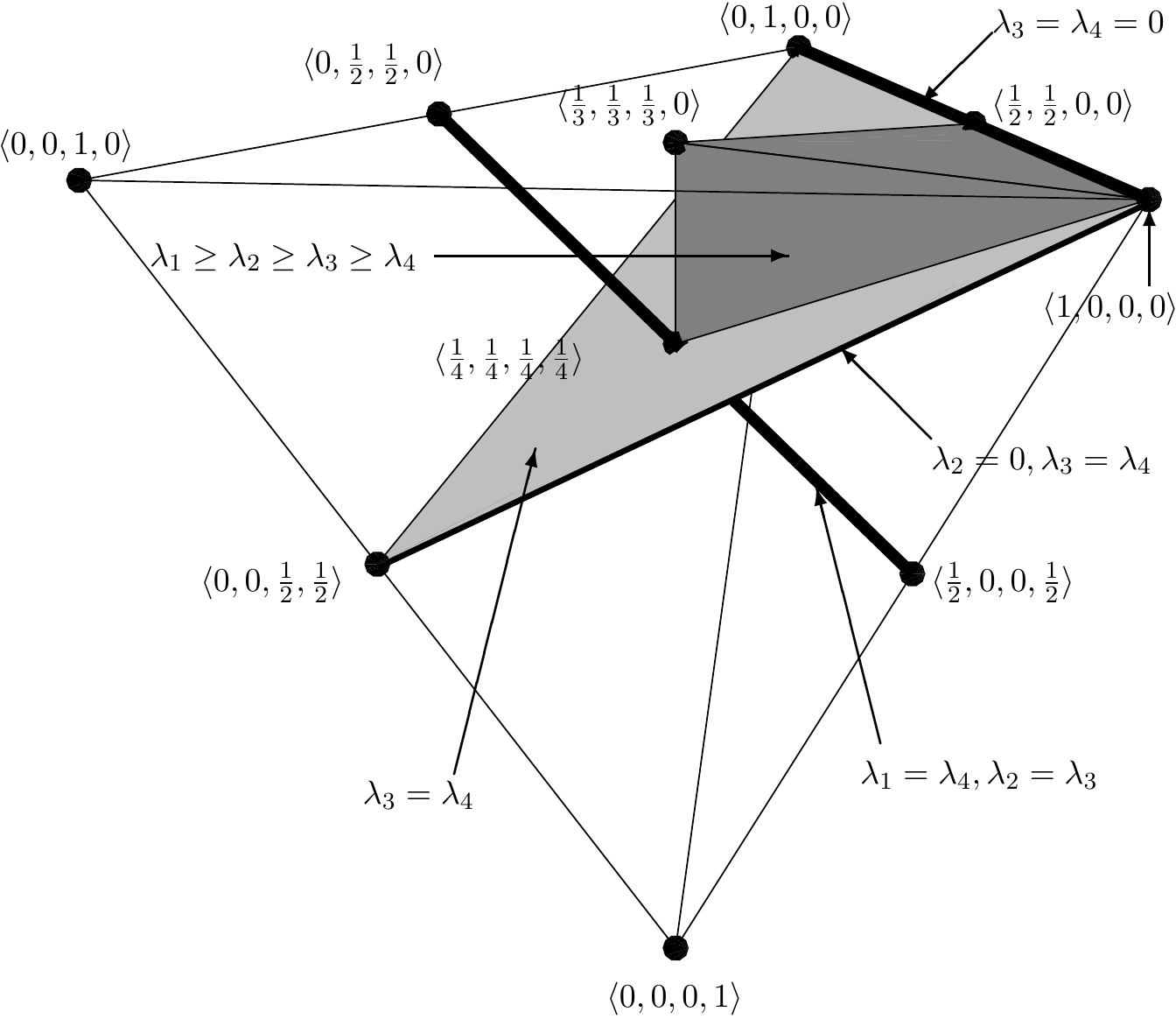}

\caption{$\bar{\mathcal{T}}$ for $n=3$ (top) and $n=4$ (bottom). We have highlighted and labeled several edges, faces and points. The Weyl chamber $\bar{\mathcal{T}}_I$ is shown in dark grey in both.  }
\label{fignew}
\end{figure}

In the remainder of this paper, we will study (differentiable) trajectories through $\mathcal{T}$ and $\mathcal{T}_Q$. We must clarify what we mean by differentiable: for any trajectory $\rho(t)$, there is one trajectory $\Lambda_Q(t)$ and several different trajectories $\Lambda(t)$ (if eigenvalues do not cross, then there are $n!$ continous $\Lambda(t)$). If we were to restrict $\Lambda(t)$ to $\mathcal{T}_I$, the trajectory would typically be non-differentiable at eigenvalue crossings. We would like to keep differentiability, and so instead of considering $\Lambda(t)\in \mathcal{T}_I$, we will consider $\Lambda(t)\in\mathcal{T}$, with the understanding that such a trajectory is not unique. When we say that $\Lambda_Q(t)$ is differentiable at time $t$, we mean that there exists a differentiable $\Lambda(t)$ that belongs to the equivalence class $\Lambda_Q(t)$. We will also refer to an eigenvalue crossing as \emph{sharp} if all crossing eigenvalues have different time derivatives at the crossing time (which implies that the crossing is isolated). Note that if differentiability holds at a sharp crossing, the relevant eigenvalues necessarily swap ordering. 
 
While $\Lambda(t)$ describes the inter-orbit motion of $\rho(t)$, the intra-orbit motion can be described by a flag. A \emph{flag} $F$ is a nesting of linear subspaces in the Hilbert space $\mathcal{H}$:
\begin{equation}
F = \{V_j: \emptyset \subset V_1 \subset V_2 \cdots V_{n_d-1} \subset  V_{n_d} = \mathcal{H}\}.
\end{equation}
In our case, let $V_\alpha$ be the direct sum of the eigenspaces belonging to the $\alpha$ largest elements in $\Lambda^d$. If all eigenvalues are distinct, the flag is \emph{complete}, \emph{i.e.} the dimension of each consecutive subspace differs by one, so that $n_d = n$. Let $P_\alpha^F$ be the projector associated with the $\alpha$th eigenspace, so that $P_1^F\oplus\cdots\oplus P_\alpha^F$ projects onto $V_\alpha$ of the flag $F$. Henceforth, we will identify a flag with a tuple of orthogonal projectors $P:= (P_1,\dots,P_{n_d})$, and we will use $\pi=(\pi_1,\dots,\pi_n)$ to denote a complete flag \footnote{This is a slight abuse of the term, as the flag is the family of subspaces, not the tuple of projectors. But there is a clear one-to-one correspondence, and it will be easier to work with the projectors.}.   

Let $\mathcal{F}$ be the set of all flags on $\mathcal{H}$. Let $\xi$ be a composition\footnote{A composition of $n$ is an ordered set of positive integers that sum to $n$} of $n$. Then $\mathcal{F}^\xi$ is the set of all flags such that $dim(P^F_\alpha)$ equals the $\alpha$th element of $\xi$ (in our case, this will be $m_\alpha$). We want to consider functions $P(t)$ taking values in $\mathcal{F}$ that are differentiable, but we need to be careful as to what this means in the vicinity of eigenvalue crossings, since $n_d$ is not constant. In fact, we will distinguish between two notions of differentiability. Let $P_{\alpha,t_0}(t)$ be the \emph{total projector} of the $\alpha$th eigenspace in a neighborhood around time $t=t_0$: $P_{\alpha,t_0}(t_0) = P_{\alpha}(t_0)$ and the projectors in $P_{\alpha,t_0}(t\ne t_0)$ are the sums of projectors in $P_{\alpha'}(t)$ corresponding to eigenvalues that cross at $t_0$. We say that $P(t)$ is \emph{weakly differentiable} at time $t=t_0$ if each total projector $P_{\alpha,t_0}(t)$ is differentiable there. This does not imply the lower-dimensional projectors of the crossing eigenvalues are differentiable in that neighborhood (see \cite{KatoBook}, section II.3 for a counter-example). We say that a flag $P(t)$ is \textit{strongly differentiable} if it can be built out of a complete flag $\pi (t)$ whose elements are differentiable. 

Define $(A)_{\alpha\beta}^P:= P_\alpha A P_\beta$ for any operator $A$ (we will drop the superscript $P$ when it is understood). We can now write down a theorem about the decomposition of $\rho$ into its eigenvalues and eigenvectors:

\begin{theorem}\label{thm1}
Suppose $\rho(t)$ obeys the Lindblad equation (\ref{lindblad}) where the Hamiltonian operator is continuous. Then there is a differentiable $\Lambda(t)$ and weakly differentiable flag $P(t)$ such that $\rho(t) = \sum_\alpha \lambda_\alpha^d P_\alpha$. At any sharp eigenvalue crossing, $P(t)$ is strongly differentiable. The derivative of a total projector is given by the formula:
\begin{equation}
\frac{d}{dt}P_{\alpha,t_0} = \sum_{\beta\ne\alpha}\frac{(\rho ')_{\alpha\beta} + (\rho ')_{\beta\alpha}}{\lambda_\alpha^d - \lambda_\beta^d}. \label{projderiv}
\end{equation}
The eigenvalue derivatives corresponding to $\lambda_\alpha^d$ are the eigenvalues of $(\rho^\prime(t))^{P(t)}_{\alpha\alpha}$. 
\end{theorem}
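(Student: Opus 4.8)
The plan is to separate the two differentiability claims from the explicit formulas, since the formulas are essentially standard first-order perturbation theory once the differentiability is in hand. First I would set up the framework using analytic perturbation theory for self-adjoint operators (Kato's book, cited in the excerpt). Since $\rho(t)$ solves a linear ODE with continuous coefficients, $\rho$ is $C^1$ in $t$. Away from crossings, $n_d$ is locally constant and the spectral projectors depend $C^1$ on $t$ by the usual Riesz-projector contour-integral representation $P_\alpha(t) = \frac{1}{2\pi i}\oint_{\Gamma_\alpha}(z - \rho(t))^{-1}\,dz$, where $\Gamma_\alpha$ is a small loop enclosing only $\lambda_\alpha^d(t)$; differentiating under the integral and using $\frac{d}{dt}(z-\rho)^{-1} = (z-\rho)^{-1}\rho'(z-\rho)^{-1}$ gives strong differentiability there, and evaluating the residue integral yields \eqref{projderiv}. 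This handles the generic-time behavior and simultaneously establishes the projector-derivative formula, so that step does double duty.

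Next I would treat a sharp crossing at $t=t_0$. The key point is that for the \emph{total} projector $P_{\alpha,t_0}$, one can choose a contour $\Gamma$ that encloses all the eigenvalues that merge at $t_0$ and stays bounded away from the rest of the spectrum for $t$ near $t_0$; the total projector is then the Riesz projector for $\Gamma$, and the same contour-integral argument shows it is $C^1$ with the stated derivative — the denominators $\lambda_\alpha^d - \lambda_\beta^d$ in \eqref{projderiv} only involve $\beta$ outside the merging group, so they stay nonzero. This gives weak differentiability everywhere. For \emph{strong} differentiability at a sharp crossing I would invoke the reduction process of Kato (Chapter II, the analytic/$C^1$ theory): restrict $\rho(t) - \mu(t)\,P_{\alpha,t_0}(t)$ (with $\mu$ the common limiting eigenvalue, suitably chosen) to the range of the total projector and observe that the crossing is sharp means the derivative of the reduced operator at $t_0$ has distinct eigenvalues, so its own spectral projectors are $C^1$ through $t_0$ — this is where ``sharp'' is used essentially and is the crux of the whole argument. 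Patching these $C^1$ sub-projectors into the complete flag $\pi(t)$ shows $P(t)$ is strongly differentiable at $t_0$.

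Finally, the eigenvalue statement: differentiating $\rho(t) = \sum_\alpha \lambda_\alpha^d(t) P_\alpha(t)$ and sandwiching with $P_\alpha(t_0)(\cdot)P_\alpha(t_0)$ — equivalently taking the $(\alpha\alpha)$ block — kills the cross terms (since $P_\alpha \dot P_\alpha P_\alpha = 0$, which follows from differentiating $P_\alpha^2 = P_\alpha$ and $P_\alpha P_\beta = 0$), leaving $(\rho')_{\alpha\alpha} = \dot\lambda_\alpha^d P_\alpha + \lambda_\alpha^d P_\alpha \dot P_\alpha P_\alpha + \cdots$; at a crossing one works with the reduced operator on the range of $P_{\alpha,t_0}$, whose eigenvalues are exactly the $\lambda$'s that merge, and first-order perturbation theory for that reduced operator says the derivatives of those eigenvalues are the eigenvalues of the compression of $\rho'$ to that range, i.e.\ of $(\rho')_{\alpha\alpha}^{P(t)}$. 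I expect the main obstacle to be the bookkeeping at sharp crossings: making precise that ``sharp'' (distinct time-derivatives) is exactly the nondegeneracy condition that lets Kato's $C^1$ reduction go through for the lower-dimensional projectors, and being careful that weak differentiability is all one gets at a non-sharp crossing. The rest — the contour-integral manipulations and the block-decomposition identities — is routine and I would not belabor it.
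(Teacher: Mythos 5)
Your handling of the weak-differentiability claims, the formula \eqref{projderiv}, and the eigenvalue-derivative statement is sound: the Riesz-projector contour argument is essentially the content of Kato's Theorem II.5.4, which the paper simply cites, and the block-decomposition identity $P_\alpha\dot P_\alpha P_\alpha=0$ is the right tool for the eigenvalue part. Those portions match the paper in substance even though you re-derive rather than cite.

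The gap is in the step you yourself flag as ``the crux'': strong differentiability at a sharp crossing. You reduce to the operator $\tilde T(t)$ on $\mathrm{ran}\,P_{\alpha,t_0}(t)$ and assert that because $\tilde T'(t_0)$ has distinct eigenvalues (sharpness), ``its own spectral projectors are $C^1$ through $t_0$.'' This inference does not hold for a merely $C^1$ family. Since $\tilde T(t_0)=0$, the spectral projectors of $\tilde T(t)$ for $t\neq t_0$ coincide with those of $\tilde T(t)/(t-t_0)$, and that quotient converges to $\tilde T'(t_0)$ only to order $o(1)$ when $\tilde T$ is $C^1$ — which is enough for continuity of the split projectors, but not for differentiability. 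Concretely, a two-level example such as $\tilde T(t)=t\,\mathrm{diag}(1,-1)+t\,|t|^{1/2}\sigma_x$ is $C^1$, has a sharp crossing, and yet has eigenprojectors that are $|t|^{1/2}$-H\"older but not differentiable at $t_0$. Kato's reduction process settles this only in the analytic case (or with enough extra derivatives of the family), and the paper itself points to exactly this difficulty in the remark following the theorem.

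What closes the gap in the paper is the Lindblad structure, which your argument never invokes. The paper writes the candidate one-dimensional projectors as solutions of the ODE \eqref{barpilim} and shows that the apparently singular terms of its right-hand side have a finite limit at $t_0$. The reason the numerator $(\rho')_{lm}$ vanishes at the same linear rate as the denominator $\lambda_l-\lambda_m$ is that $\rho'= -i[H,\rho]+\mathcal{L}_D(\rho)$: the commutator piece, sandwiched between eigenprojectors of $\rho$, picks up an exact factor of $(\lambda_l-\lambda_m)$, while the dissipative piece is a \emph{bounded linear} superoperator of the $C^1$ quantity $\rho(t)$, so $\mathcal{L}_D(\rho(t))$ inherits a first-order Taylor expansion even though $\rho'$ itself might not. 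Sharpness then guarantees the denominator's slope is nonzero, and the bounded ODE yields a $C^1$ solution that patches the one-dimensional projectors through the crossing. Without this use of the dynamical equation — i.e.\ treating $\rho(t)$ as an arbitrary $C^1$ Hermitian family — the conclusion is simply false, so the reduction-process appeal cannot be made rigorous as stated. You would need to replace it with an argument that exploits $\rho'=\mathcal{L}(\rho)$ in the way the paper does.
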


\begin{proof}
Reference \cite{KatoBook}, specifically Theorem 5.4 from chapter two therein, covers much of this theorem. It states the differentiability of $\Lambda$ and differentiability of the total projectors, as well the formulas for their derivatives. The two stipulations are that (i) $\rho(t)$ is differentiable, which is true as we have required $H(t)$ to be continuous, and (ii) all $\lambda_j$ are semi-simple, which is of course true for all Hermitian operators. The formula for the eigenvalue derivative is also provided in this reference. The formula for the eigenprojector derivative is given as $\frac{d}{dt}P_{\alpha,t_0} = - P_{\alpha,t_0}\rho' S_\alpha -S_\alpha\rho' P_{\alpha,t_0}$, where $S_\alpha = \sum_{\beta\ne\alpha} \frac{P_\beta}{\lambda_\alpha^d-\lambda_\beta^d}$. Our formula clearly follows.

All that is left to prove is strong differentiability at a sharp crossing, which requires some care. First note the crossing must be isolated: if the crossing is at $t=t_0$, there is a neighborhood $T_0$ on which all eigenvalues are distinct for $t\ne t_0$. We must find a complete set of differentiable one-dimensional orthogonal projectors $\bar{\pi}_l(t)$ on $T_0$ that sum to the relevant higher-dimensional projectors at $t=t_0$, and also obey the formula for projector derivatives given in the theorem. Let $C_\alpha(t_0)$ be the subset of indices $1$ through $n$ corresponding to the eigenvalues that equal $\lambda_\alpha^d$ at $t_0$. Now for $l\in C_\alpha (t_0)$, define the $\bar{\pi}_l(t_0)$'s to be the eigenprojectors of $(\rho'(t_0))_{\alpha\alpha}$, and $\mu_l$ the corresponding eigenvalues. Note that $\lambda_l'(t_0) = \mu_l$. Since the eigenvalue crossing is sharp, all $\mu_l$ for $l\in C_\alpha$ are distinct, and therefore $\bar{\pi}_l(t_0)$ is well-defined. Moreover, $\sum_{l\in C_\alpha(t_0)}\bar{\pi}_l (t_0) = P_\alpha(t_0)$. 

For $t \ne t_0$ and $l\in C_\alpha$, define $\bar{\pi}_l(t)$ to be the solution of an ODE: 
\begin{align}
\frac{d}{dt}\bar{\pi}_l = \sum_{m=1, m\ne l}^n \frac{(\rho ')_{lm} + (\rho ')_{ml}}{\lambda_l - \lambda_m}. \label{barpilim}
\end{align}
This ODE does not appear to be well-defined at $t=t_0$, but we claim the limit of the RHS exists as $t\rightarrow t_0$, and we define the ODE to be this limit at $t=t_0$.

To prove our claim, we must show that if $m\in C_\alpha$, its corresponding numerator must approach zero just as fast as $\lambda_m-\lambda_l$. Because the eigenvalue crossing is sharp, the denominator goes to zero linearly: it is $(\mu_m-\mu_l)\delta t +o(\delta t)$ for small $\delta t$, and $\mu_m \ne \mu_l$ by assumption. The numerator goes to zero because, if we write $\rho(t) = \rho(t_0)+\rho'(t_0)\delta t + o(\delta t)$, and $\tau := [-iH,\rho '] +\mathcal{L}_D(\rho')$:
\begin{align}
(\rho'(t))_{lm} &= (\rho'(t_0))_{lm} + (\tau(t_0))_{lm} \delta t  + o(\delta t) \nonumber \\
&= (\tau(t_0))_{lm} \delta t  + o(\delta t),  \label{limit}
\end{align}
where we have substituted the expression for $\rho(t)$ into the Lindblad equation and applied the projectors $\bar{\pi}_l$ and $\bar{\pi}_m$. The first term is zero as we have constructed $\bar{\pi}_l$ and $\bar{\pi}_m$ to be the eigenprojectors of $(\rho'(t_0))_{\alpha\beta}$ and $l\ne m$. 

The ODE is then well defined. It is bounded and thus Lipschitz, so the $\bar{\pi}_l$ have a well-defined solution. By construction, they obey the formula (\ref{projderiv}) for $t\ne t_0$, and all that remains to show is that it obeys the formula at $t=t_0$. In other words, we must sum the ODE's for all $l\in C_\alpha$:
\begin{align}
\frac{d}{dt}P_{\alpha ,t_0} &= \frac{d}{dt}\sum_{l\in C_\alpha}\bar{\pi}_l \\
&= \lim_{t\rightarrow t_0}  \sum_{l\in C_\alpha}\sum_{m=1, m\ne l}^n \frac{(\rho ')_{lm} + (\rho ' )_{ml}}{\lambda_l - \lambda_m} \\
&= \lim_{t\rightarrow t_0} (\sum_{l,m\in C_\alpha, l\ne m} + \sum_{l\in C_\alpha, m\notin C_\alpha}) \frac{(\rho ')_{lm} + (\rho ' )_{ml}}{\lambda_l - \lambda_m} \\
&= \lim_{t\rightarrow t_0}  \sum_{l\in C_\alpha, m\notin C_\alpha} \frac{(\rho ')_{lm} + (\rho ' )_{ml}}{\lambda_l - \lambda_m} \\
&=  \sum_{\beta=1,\beta\ne \alpha}^{n_d} \frac{(\rho ')_{\alpha\beta} + (\rho ' )_{\beta\alpha}}{\lambda^d_\alpha - \lambda^d_\beta}. 
\end{align}
In the second-to-last line, the first summation vanishes because of cancellation of terms with swapped indices. Thus we have proven strong differentiability of the flag at sharp eigenvalue crossings.   
\end{proof}

\begin{remark}
Strong differentiability often holds for non-sharp crossings as well. The limit of (\ref{barpilim}) is well-behaved as long as, for every (isolated) crossing pair, there is a higher-order derivative of $\rho$ at which the corresponding derivative eigenvalues differ. If  $\rho$ is analytic, strong differentiability holds: either two crossing eigenvalues have some order of derivative at which they can be resolved, or they are identical over some neighborhood. The pathological counter-example mentioned in \cite{KatoBook}, for example, involves a smooth but non-analytic operator. In this case, two eigenvalues can have identical derivatives at all orders, and yet the crossing is isolated. In our case, we only require $\rho$ to be differentiable, so higher-order derivatives may not exist.
\end{remark}

\begin{remark}
If the Hamiltonian is piecewise-continuous instead of continuous, we can easily modify the theorem as long as right- and left-sided limits exist at the discontinuities. If such limits exist, the corresponding one-sided derivatives of $\rho$ exist, and so there is no problem. If, on the other hand, $||H(t)||\rightarrow\infty$ for $t\rightarrow t_0\pm$, the differentiability properties of the eigenvalues and projectors clearly do not hold.
\end{remark}

Now let us write down formulas for the derivatives of $\Lambda$ and $\pi$. We have the following proposition:
\begin{proposition}
If $\rho (t)$ obeys the Lindblad equation, and $( \Lambda(t),\pi (t))$ is a differentiable decomposition of $\rho(t)$, define $w_{ij}^\pi = \sum_{k=1}^N \textrm{Tr}( \pi_i L_k \pi_j L_k^\dagger )$. Then:
\begin{align}
\frac{d}{dt}\Lambda (t) = \Omega^{\pi(t)}  \Lambda(t) \label{LambODE},
\end{align}
where $\Omega^{\pi}$ is an $n$-by-$n$ matrix with:
\begin{equation}
\Omega^{\pi} = \left\{ \begin{array}{cc}
w_{ij}^{\pi}, & i \ne j \\
- \sum_{l\ne j} w_{lj}^{\pi}, &  i = j . 
\end{array} \right.
\end{equation}
This formula holds for sharp eigenvalue crossings.
\end{proposition}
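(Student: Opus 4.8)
The plan is to prove the formula first at times where all $n$ eigenvalues are distinct — where $P(t)=\pi(t)$ is a complete flag of rank-one projectors — and then to push it across a sharp crossing by a continuity argument built on the strong differentiability supplied by Theorem~\ref{thm1}. In the distinct-eigenvalue regime, $\pi_j$ is the rank-one eigenprojector of $\lambda_j$, so $(\rho')^\pi_{jj}=\pi_j\rho'\pi_j$ is supported on a one-dimensional space and its unique nonzero eigenvalue is its trace; by the eigenvalue-derivative clause of Theorem~\ref{thm1} this gives $\dot\lambda_j=\mathrm{Tr}(\pi_j\rho')$. The whole proof is then the evaluation of this trace using the Lindblad equation $\rho'=[-iH,\rho]+\mathcal{L}_D(\rho)$ together with $\rho=\sum_l\lambda_l\pi_l$.

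First I would dispose of the commutator. Using $\pi_j\rho=\rho\pi_j=\lambda_j\pi_j$ (orthonormality of the flag) and cyclicity of the trace, both $\mathrm{Tr}(\pi_j H\rho)$ and $\mathrm{Tr}(\pi_j\rho H)$ equal $\lambda_j\,\mathrm{Tr}(\pi_j H)$, so $\mathrm{Tr}(\pi_j[-iH,\rho])=0$ — the infinitesimal form of the statement that Hamiltonian flow is spectrum-preserving. Next I would expand $\mathrm{Tr}(\pi_j\mathcal{L}_D(\rho))$. The gain terms give $\sum_k\sum_l\lambda_l\,\mathrm{Tr}(\pi_j L_k\pi_l L_k^\dagger)=\sum_l w^\pi_{jl}\lambda_l$ immediately. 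For the two anticommutator terms I would insert the resolution of the identity $I=\sum_l\pi_l$ between $L_k^\dagger$ and $L_k$, then use $\pi_j\rho=\rho\pi_j=\lambda_j\pi_j$ and cyclicity to obtain $-\tfrac12\lambda_j\sum_k\mathrm{Tr}(\pi_j L_k^\dagger L_k)-\tfrac12\lambda_j\sum_k\mathrm{Tr}(\pi_j L_k^\dagger L_k)=-\lambda_j\sum_l w^\pi_{lj}$. Adding the pieces, the diagonal contributions $\lambda_j w^\pi_{jj}$ cancel between gain and loss, leaving $\dot\lambda_j=\sum_{l\ne j}w^\pi_{jl}\lambda_l-\lambda_j\sum_{l\ne j}w^\pi_{lj}=(\Omega^\pi\Lambda)_j$, which is the claim. (It is worth recording in passing that $w^\pi_{ij}=\sum_k\|\pi_i L_k\pi_j\|_{\mathrm{HS}}^2\ge 0$ is real and nonnegative, so $\Omega^\pi$ is a bona fide rate matrix with vanishing column sums, consistent with $\sum_j\dot\lambda_j=0$ and the trace constraint on $\Lambda$.)

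Finally, the sharp-crossing case. Let the crossing be at $t=t_0$. Theorem~\ref{thm1} furnishes a complete flag $\pi(t)$ that is strongly differentiable, hence continuous, on a neighborhood of $t_0$, and for $t\ne t_0$ all eigenvalues are distinct, so the identity $\dot\Lambda(t)=\Omega^{\pi(t)}\Lambda(t)$ already holds on the punctured neighborhood. The map $(\pi_i,\pi_j)\mapsto w^\pi_{ij}$ is continuous (polynomial in the projector entries and in the fixed $L_k$), and $\Lambda(t)$ is continuous, so the right-hand side $\Omega^{\pi(t)}\Lambda(t)$ extends continuously to $t_0$. Since $\Lambda$ is differentiable at $t_0$ as well (Theorem~\ref{thm1}), the elementary fact that a function which is differentiable on a neighborhood and whose derivative is continuous at the point must satisfy $\dot\Lambda(t_0)=\lim_{t\to t_0}\dot\Lambda(t)$ yields the formula at $t_0$. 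I expect this last limiting passage to be the only genuinely delicate point — specifically, making sure the rank-one projectors continued through the crossing by Theorem~\ref{thm1} keep the right-hand side continuous there; the remainder is the direct trace computation above.
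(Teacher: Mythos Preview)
Your proof is correct and the core trace computation is identical to the paper's: both write $\dot\lambda_j=\mathrm{Tr}(\pi_j\rho'\pi_j)$, kill the Hamiltonian piece, expand the dissipator, and insert $\sum_l\pi_l=I$ to produce $\sum_l w^\pi_{jl}\lambda_l-\lambda_j\sum_l w^\pi_{lj}$. The one genuine difference is how the sharp crossing is handled. The paper does not need a separate limiting argument: it simply invokes the clause of Theorem~\ref{thm1} that the eigenvalue derivatives at a crossing are the eigenvalues of $(\rho')^{P}_{\alpha\alpha}$, and that the rank-one $\pi_j$ produced by strong differentiability were \emph{constructed} as the eigenprojectors of that block, so $\mathrm{Tr}(\pi_j\rho'\pi_j)$ is literally the relevant eigenvalue even at $t_0$. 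Your route---prove the identity off the crossing and pass to the limit using continuity of $\pi(t)$ and $w^\pi$---is a legitimate alternative and arguably more robust, since it does not depend on remembering exactly how the $\pi_j(t_0)$ were defined inside Theorem~\ref{thm1}; the price is the mild extra step of justifying $\dot\Lambda(t_0)=\lim_{t\to t_0}\dot\Lambda(t)$ (which follows from the mean value theorem once $\Lambda$ is known to be differentiable on a neighborhood).
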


\begin{proof}
We know that if $\Lambda(t)$ is differentiable, $\frac{d}{dt}\Lambda$ is given by the eigenvalues of the operators $(\rho')_{\alpha\alpha}^F$. Since we know the elements of $\pi$ are their eigenprojectors, we can retrieve their eigenvalues by tracing over the one-dimensional projections of $\rho'$.
\begin{align}
\frac{d}{dt}\Lambda_j  &=  \textrm{Tr}(\pi_j \rho' \pi_j) \\
&= \sum_{k=1}^n \textrm{Tr}(\pi_j [-iH,\lambda_k\pi_k]\pi_j + \pi_j\mathcal{L}_D (\lambda_k\pi_k)\pi_j)\\
&= \sum_{k=1}^n \sum_{l=1}^N \textrm{Tr}(\pi_j L_l \lambda_k\pi_k L_l^\dagger \pi_j - \frac{1}{2}\pi_j\{L_l^\dagger L_l, \lambda_k\pi_k \}\pi_j)\\
&= \sum_{k=1}^n \sum_{l=1}^N \textrm{Tr}(\lambda_k \pi_j L_l \pi_k L_l^\dagger \pi_j - \lambda_j \pi_j L_l^\dagger \pi_k L_l \pi_j )\\
&= \sum_{k=1}^n \lambda_k w_{jk}^{\pi} - \lambda_j w_{kj}^{\pi}\\
&= \sum_{k=1}^n \Omega^{\pi}_{jk}\Lambda_k.
\end{align}
We have made use of the identities $\pi_j\pi_k = \delta_{jk}$ and $\sum_{k=1}^n \pi_k = I_n$. 
\end{proof}

\begin{corollary}
$\Omega^{\pi}$ is rank-deficient. On the projected simplex, we have the formula for $x(t)\in\bar{\mathcal{T}}$:
\begin{equation}
\frac{d}{dt}x(t) = b^{\pi(t)} + A^{\pi(t)}x(t), \label{xeq}
\end{equation}
where $b^{\pi} = \Pi \Omega^{\pi} \iota$ and $A^{\pi} =  \Pi \Omega^{\pi} \Pi^T$. 
\end{corollary}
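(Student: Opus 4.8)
The plan is to treat (\ref{xeq}) as a routine consequence of the Proposition together with the bookkeeping identities already recorded for the projection $\mathcal{P}$; the only genuinely new assertion is the rank-deficiency of $\Omega^{\pi}$, which I would obtain from a column-sum identity. So I would proceed in two steps: (i) show $\mathbf{1}^{T}\Omega^{\pi}=0$, where $\mathbf{1}=(1,\dots,1)^{T}$, and (ii) substitute the coordinate change $\Lambda=\iota+\Pi^{T}x$ into $\frac{d}{dt}\Lambda=\Omega^{\pi}\Lambda$.

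For step (i), I would read the column sums straight off the definition of $\Omega^{\pi}$: fixing $j$,
\[
\sum_{i=1}^{n}\Omega^{\pi}_{ij}\;=\;\Omega^{\pi}_{jj}+\sum_{i\ne j}\Omega^{\pi}_{ij}\;=\;-\sum_{l\ne j}w_{lj}^{\pi}+\sum_{i\ne j}w_{ij}^{\pi}\;=\;0,
\]
so the all-ones vector is a left null vector of $\Omega^{\pi}$ and the matrix is singular. This is nothing but conservation of trace at the level of the spectrum: $\frac{d}{dt}\sum_{j}\lambda_{j}=\mathbf{1}^{T}\Omega^{\pi}\Lambda=\textrm{Tr}\,\mathcal{L}(\rho)=0$. (Alternatively one can note that this linear functional of $\Lambda$ vanishes on the whole affine hyperplane $\{\mathbf{1}^{T}\Lambda=1\}$, which already forces $\mathbf{1}^{T}\Omega^{\pi}=0$; the direct computation is cleaner.)

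For step (ii), I would use the identities stated just after the map $\mathcal{P}$ --- $\Pi\Pi^{T}=I_{n-1}$, $\Pi^{T}\Pi=I_{n}-n\,\iota\iota^{T}$, $\Pi\iota=0$ --- together with $\iota^{T}\Lambda=\frac{1}{n}$ (valid because $\sum_{j}\lambda_{j}=1$), to write $\Pi^{T}x=\Pi^{T}\Pi\Lambda=\Lambda-n\iota(\iota^{T}\Lambda)=\Lambda-\iota$, hence $\Lambda=\iota+\Pi^{T}x$. Differentiating $x=\Pi\Lambda$ and inserting (\ref{LambODE}),
\[
\frac{d}{dt}x=\Pi\,\frac{d}{dt}\Lambda=\Pi\,\Omega^{\pi}\Lambda=\Pi\,\Omega^{\pi}\bigl(\iota+\Pi^{T}x\bigr)=\Pi\Omega^{\pi}\iota+\Pi\Omega^{\pi}\Pi^{T}x,
\]
which is exactly (\ref{xeq}) with $b^{\pi}=\Pi\Omega^{\pi}\iota$ and $A^{\pi}=\Pi\Omega^{\pi}\Pi^{T}$. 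Because the Proposition's identity $\frac{d}{dt}\Lambda=\Omega^{\pi}\Lambda$ was shown to persist at sharp eigenvalue crossings, and this last calculation is purely algebraic, (\ref{xeq}) holds there too.

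I do not expect a real obstacle here: the statement is an affine change of coordinates plus one column-sum check. The two points worth a moment's care are that $x\mapsto\iota+\Pi^{T}x$ genuinely inverts $\Lambda\mapsto\Pi\Lambda$ on the hyperplane supporting $\mathcal{T}$ --- which is precisely what $\Pi\Pi^{T}=I_{n-1}$ and the fact that $\Pi^{T}\Pi=I_{n}-n\iota\iota^{T}$ is the orthogonal projection onto $\mathbf{1}^{\perp}$ provide, after subtracting off $\iota$ --- and that $b^{\pi}$ is what makes (\ref{xeq}) affine rather than homogeneous: $\mathbf{1}^{T}\Omega^{\pi}=0$ does not imply $\Omega^{\pi}\iota=0$, so generically $b^{\pi}\ne 0$, vanishing precisely when the completely mixed state is stationary under the spectral flow.
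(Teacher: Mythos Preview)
Your proposal is correct and follows essentially the same approach as the paper: the paper also derives rank-deficiency from the vanishing column-sums of $\Omega^{\pi}$ (tying this to conservation of $\sum_j\lambda_j$), and obtains (\ref{xeq}) by substituting $\Lambda=\iota+\Pi^{T}x$ into (\ref{LambODE}) and left-multiplying by $\Pi$. Your write-up simply spells out the algebra in more detail.
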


\begin{proof}
$\Omega^{\pi}$ must be rank-deficient because its column-sums are zero, which is a reflection of the fact that the element-sum of $\Lambda$ must be one. The ODE is obtained by substituting $\Lambda = \iota + \Pi^T x$ into the ODE in the proposition, and then multiplying by $\Pi$.
\end{proof}

\section{The Projected Control System}

We have decomposed the Lindblad system into its spectrum and flag, and now we want to define a new control system. Let us clarify the distinction between the old and new control systems:

\begin{definition}
The \emph{$\rho$-control system} is the Lindblad equation (\ref{lindblad}), a complete set of control Hamiltonians $\{H_i\}$ that span the Lie algebra $\mathfrak{su}(n)$, and the control functions $u_i(t)$ that are piecewise-continuous, real-valued and unbounded.
\end{definition}

For a flag $\pi$ or $P$ and eigenvalue vector $\Lambda$, define the following maps:
\begin{align}
\mathcal{M}_{ij}(\Lambda,\pi) &= \pi_i \mathcal{L}_D(\sum_l\lambda_l\pi_l)\pi_j \\
\mathcal{M}_{\alpha\beta}(\Lambda,P) &= P_\alpha \mathcal{L}_D(\sum_\gamma\lambda_\gamma^dP_\gamma)P_\beta.
\end{align}

Let $\mathbb{F}\subset\mathcal{F}$ be the set of complete flags. Then:

\begin{definition}\label{LCS}
The \emph{$\Lambda$-control system} is the linear ODE (\ref{xeq}), together with control flags $\pi(t)$ on the control-set $\mathbb{F}$. We consider only functions $\pi(t)$ that are piecewise-differentiable. Additionally, the control functions must meet the following two conditions: 
\begin{enumerate}
\item At any crossing $\lambda_i=\lambda_j$, there is a neighborhood and $C > 0$ such that $|| \mathcal{M}_{ij}(\Lambda(t),\pi(t)) || \le C||\lambda_i(t)-\lambda_j(t) ||$.
\item $\pi(t)$ must satisfy an initial and a final condition: $\pi(t_i)=\pi_i$ and $\pi(t_f)=\pi_f$.
\end{enumerate}
\end{definition}

The first condition is essentially the requirement that $\pi(t)$ always diagonalizes $(\rho)_{\alpha\alpha}$ at crossings, and that it is sufficiently well-behaved in the vicinity of the crossing that a bounded Hamiltonian can be recovered. Let $\mathbb{F}^\Lambda$ denote the set of $\pi$ that satisfy $\mathcal{M}_{ij}(\Lambda, \pi) = 0$ for $\lambda_i = \lambda_j$. $\mathbb{F}^\Lambda \ne \mathbb{F}$ at crossings, so the control set shrinks: we are free to choose the  projectors $P_\alpha$, but not their diagonalizations. The dimension of the control set is $n^2-\sum_\alpha m_\alpha^2$. When all eigenvalues are simple, this dimension is $n^2-n$. Conversely, at the completely mixed state where $\rho =\frac{1}{n}I_n$, $\Lambda = \iota$ and the control set is a singleton. 

The second condition above is imposed since we typically have an initial and target density matrix in mind, each with their own flags that we may not choose. Note that both conditions can be dropped if we are willing to settle for approximate controllability: that is, if it suffices that our final $\rho$ is arbitrarily close to our target $\rho$. We will expand on this shortly.

We can now write down a formula for the Hamiltonian:

\begin{proposition}
Given a trajectory $\Lambda(t)$ and controls $\pi(t)$ in the $\Lambda$-control system, we can recover the density operator $\rho(t)=\sum_j \lambda_j(t)\pi_j(t)$ using the following Hamiltonian:
\begin{align}
H^\pi(t) &= i \Big( -\sum_{j=1}^{n} \pi_j(t) \pi_j^\prime(t)  + \sum_{\substack{\alpha,\beta=1 \\ \alpha\ne \beta}}^{n_d}\left.\frac{\mathcal{M}_{\alpha\beta}(\Lambda,P)}{\lambda^d_\alpha(t)-\lambda^d_\beta(t)}  \right), \label{ham}
\end{align}
where $P_\alpha = \sum_{j\in C_\alpha}\pi_j$. This Hamiltonian is piecewise-continuous.
\end{proposition}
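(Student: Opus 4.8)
The plan is to take $\rho(t):=\sum_{j}\lambda_j(t)\pi_j(t)$, built from the $\Lambda$-control data, and to verify directly that it solves the Lindblad equation (\ref{lindblad}) with $H=H^\pi$ as in (\ref{ham}); since $H^\pi$ will turn out to be Hermitian, this is precisely the statement that (\ref{ham}) reconstructs a legitimate $\rho$-trajectory. I would check the operator identity $\rho'=[-iH^\pi,\rho]+\mathcal{L}_D(\rho)$ block-by-block in the flag decomposition $(\cdot)_{\alpha\beta}=P_\alpha(\cdot)P_\beta$, with $P_\alpha=\sum_{j\in C_\alpha}\pi_j$, separating the off-diagonal blocks $\alpha\ne\beta$ from the diagonal ones, and deal with piecewise-continuity at eigenvalue crossings only at the end.

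For $\alpha\ne\beta$ I would differentiate $\rho=\sum_\alpha\lambda_\alpha^d P_\alpha$ and use nothing more than $P_\alpha P_\beta=\delta_{\alpha\beta}P_\alpha$ and its derivative to get $(\rho')_{\alpha\beta}=(\lambda_\alpha^d-\lambda_\beta^d)\,P_\alpha' P_\beta$; equivalently, this is (\ref{projderiv}) multiplied on the right by $P_\beta$. On the other side, $[-iH^\pi,\rho]_{\alpha\beta}=i(\lambda_\alpha^d-\lambda_\beta^d)(H^\pi)_{\alpha\beta}$ because $\rho$ acts as the scalar $\lambda_\alpha^d$ (resp. $\lambda_\beta^d$) on the left (resp. right) of each block. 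Inserting (\ref{ham}) and using the two elementary identities $(\sum_j\pi_j\pi_j')_{\alpha\beta}=P_\alpha'P_\beta$ (which follows from $\pi_j\pi_j'\pi_k=\pi_j'\pi_k$ whenever $j\ne k$, true here since $j\in C_\alpha$ and $k\in C_\beta$ lie in different blocks) and $(\sum_{\gamma\ne\delta}\mathcal{M}_{\gamma\delta}/(\lambda_\gamma^d-\lambda_\delta^d))_{\alpha\beta}=\mathcal{M}_{\alpha\beta}/(\lambda_\alpha^d-\lambda_\beta^d)$, one gets $[-iH^\pi,\rho]_{\alpha\beta}=(\lambda_\alpha^d-\lambda_\beta^d)P_\alpha'P_\beta-\mathcal{M}_{\alpha\beta}$; since $\mathcal{L}_D(\rho)_{\alpha\beta}=\mathcal{M}_{\alpha\beta}$, the off-diagonal blocks agree with $(\rho')_{\alpha\beta}$.

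For $\alpha=\beta$ note first that $[-iH^\pi,\rho]_{\alpha\alpha}=0$ for any $H^\pi$ (again because $\rho$ is scalar on the block), so it is enough to show $(\rho')_{\alpha\alpha}=\mathcal{M}_{\alpha\alpha}$. The same projector identities give $(\rho')_{\alpha\alpha}=(\lambda_\alpha^d)'P_\alpha$, while the $\Lambda$-control equation (\ref{LambODE}) supplies the other half: by the computation proving that equation, $\lambda_j'=\sum_k\Omega^\pi_{jk}\lambda_k=\textrm{Tr}(\pi_j\mathcal{L}_D(\rho)\pi_j)=\textrm{Tr}(\mathcal{M}_{jj})$, and at a generic time each $\pi_j$ is one-dimensional, so $\mathcal{M}_{jj}$ is forced to be the scalar $\lambda_j'$ times $\pi_j$, whence $\mathcal{M}_{\alpha\alpha}=(\lambda_\alpha^d)'P_\alpha$. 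On a persistent degeneracy ($\lambda_i\equiv\lambda_j$ for $i,j\in C_\alpha$ on an interval), condition~1 of Definition~\ref{LCS} forces $\mathcal{M}_{ij}=0$ there, and differentiating $\lambda_i=\lambda_j$ gives $\lambda_i'=\lambda_j'$, so again $\mathcal{M}_{\alpha\alpha}=\sum_{i\in C_\alpha}\lambda_i'\pi_i=(\lambda_\alpha^d)'P_\alpha$. Finally, $H^\pi$ is Hermitian: the first term of (\ref{ham}) is self-adjoint because $\sum_j\pi_j=I$ forces $\sum_j\pi_j'=0$, and the second because $\mathcal{L}_D(\rho)$ is Hermitian, so $\mathcal{M}_{\alpha\beta}^\dagger=\mathcal{M}_{\beta\alpha}$ and the denominators are real.

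It remains to establish piecewise-continuity. Away from eigenvalue crossings every ingredient of (\ref{ham}) is continuous, since $\pi(t)$ is piecewise-differentiable, $\Lambda(t)$ is differentiable with the $\lambda_\alpha^d$ separated, and $\mathcal{L}_D$ is bounded. Near a sharp crossing, $\sum_j\pi_j\pi_j'$ stays bounded because $\pi(t)$ is strongly differentiable there by Theorem~\ref{thm1}, and the only potentially singular terms in (\ref{ham}) are the $\mathcal{M}_{\alpha\beta}/(\lambda_\alpha^d-\lambda_\beta^d)$ with $\lambda_\alpha^d,\lambda_\beta^d$ merging; these are kept bounded by exactly the bound $\|\mathcal{M}_{ij}\|\le C\|\lambda_i-\lambda_j\|$ imposed in condition~1 of Definition~\ref{LCS}, which is designed for this purpose, and together with the existence of one-sided limits this gives piecewise-continuity. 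I expect the genuine work to be concentrated here --- in the crossing and degenerate-block bookkeeping needed to keep the diagonal-block identity consistent and (\ref{ham}) bounded as eigenspaces collide --- whereas the generic-time verification of the Lindblad equation is a short manipulation of the projector identities.
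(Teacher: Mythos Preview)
Your proof is correct and complete. The paper takes a different organizational route: instead of decomposing the Lindblad equation block-by-block in $(\cdot)_{\alpha\beta}$, it splits the Hamiltonian as $H^\pi=H_A^\pi+H_B^\pi$ (the two sums in (\ref{ham})) and verifies the two operator identities $[-iH_A^\pi,\rho]=\sum_j\lambda_j\pi_j'$ and $[-iH_B^\pi,\rho]+\mathcal{L}_D(\rho)=\sum_j\lambda_j'\pi_j$ separately, using the same projector identities $\pi_j\pi_k=\delta_{jk}\pi_j$ and $\pi_j'\pi_j+\pi_j\pi_j'=\pi_j'$ that you use. Your block decomposition makes the role of the eigenvalue equation (\ref{LambODE}) in closing the diagonal identity more transparent: you explicitly use that $\pi_j$ is rank one to pass from $\lambda_j'=\textrm{Tr}(\mathcal{M}_{jj})$ to $\mathcal{M}_{jj}=\lambda_j'\pi_j$, and you invoke condition~1 of Definition~\ref{LCS} to kill the cross terms $\mathcal{M}_{ij}$ inside a degenerate block, whereas the paper's chain $\sum_\alpha P_\alpha\mathcal{L}_D(\rho)P_\alpha=\sum_\alpha P_\alpha\rho' P_\alpha=\sum_j\lambda_j'\pi_j$ leaves that step implicit. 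You also verify Hermiticity of $H^\pi$, which the paper does not. Both arguments handle piecewise-continuity in the same way, by appealing to the bound in condition~1 near crossings.
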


\begin{proof}
Firstly, note that the piecewise-continuity follows from condition one in the definition of the $\Lambda$-control system. If we write the two terms of the Hamiltonian $H^\pi = H_A^\pi + H_B^\pi$, it is clear that $H_A^\pi$ is piecewise-continuous due to the piecewise-differentiability of $\pi$. $H_B^\pi$ is piecewise-differentiable because the numerator and denominator are, and condition one demands the numerator always approaches zero at least as fast as the denominator.

We must now show that our re-constructed $\rho(t)$ and $H^{\pi}(t)$ obey the Lindblad equation (\ref{lindblad}), which amounts to:
\begin{equation}
\sum_{j=1}^{n} \left( \lambda_j^{\prime} \pi_j + \lambda_j \pi_j^{\prime} \right)=
\sum_{j=1}^{n} \left([-iH^\pi,\lambda_j \pi_j] + \mathcal{L}_D(\lambda_j \pi_j)\right). \label{decomplind}\end{equation}
We claim that $[-iH^\pi_A,\sum_j \lambda_j\pi_j] = \sum_j \lambda_j\pi_j^\prime$ and that $[-iH^\pi_B,\sum_j \lambda_j\pi_j]+\sum_j \mathcal{L}_D(\lambda_j\pi_j) = \sum_j \lambda_j^\prime\pi_j$, which if true would prove the proposition. 

For the first part of the claim: 
\begin{align}
&[-iH^\pi_A,\sum_j \lambda_j\pi_j] = -\sum_{j, k=1}^n [\pi_k \pi_k^\prime, \lambda_j\pi_j] \\
&= - \sum_{j, k=1}^n \lambda_j \pi_k \pi_k^\prime \pi_j + \sum_{j=1}^{n} \lambda_j \pi_j \pi_j^\prime \\
&= - \sum_{j, k=1}^n \lambda_j (\pi_k^\prime - \pi_k^\prime \pi_k) \pi_j + \sum_{j=1}^{n} \lambda_j (\pi_j^\prime -  \pi_j^\prime \pi_j) \\
&= - \sum_{j, k=1}^n  (\pi_k^\prime \lambda_j\pi_j - \lambda_k\pi_k^\prime \pi_k)  + \sum_{j=1}^{n} \lambda_j (\pi_j^\prime -  \pi_j^\prime \pi_j) \\
&= \sum_{j=1}^{n} \lambda_j \pi_j^\prime, \end{align}
where we have used the identities  $\sum_{k} \pi^\prime_k = 0$, and $\pi_j^\prime \pi_j + \pi_j \pi_j^\prime = \pi_j^\prime$. 

For the second part of the claim:
\begin{align}
&[-iH^\pi_B,\sum_{j=1}^n \lambda_j\pi_j]+\sum_{j=1}^n \mathcal{L}_D(\lambda_j\pi_j) = \sum_{j=1}^n \Big( \mathcal{L}_D(\lambda_j\pi_j) \nonumber \\
&\hspace{1.5cm} +\sum_{\substack{\alpha,\beta=1 \\ \alpha\ne \beta}}^{n_d} \sum_{\gamma=1}^{n_d}\left. \frac{[P_\alpha \mathcal{L}_D(\lambda_\gamma^d P_\gamma) P_\beta, \lambda_j\pi_j]}{\lambda^d_\alpha-\lambda^d_\beta} \right) \\
&= \sum_{j=1}^n \mathcal{L}_D(\lambda_j\pi_j) - \sum_{\substack{\alpha,\beta=1 \\ \alpha\ne \beta}}^{n_d} \sum_{\gamma=1}^{n_d} P_\alpha \mathcal{L}_D(\lambda_\gamma^d P_\gamma) P_\beta  \\
&= \sum_{\alpha =1}^{n_d} P_\alpha \mathcal{L}_D(\rho) P_\alpha  = \sum_{\alpha =1}^{n_d} P_\alpha\left( [-iH^\pi,\rho] +\mathcal{L}_D(\rho) \right) P_\alpha \\
&=\sum_{\alpha =1}^{n_d}P_\alpha\frac{d\rho}{dt}P_\alpha = \sum_{j=1}^n \lambda_j^\prime \pi_j.
\end{align}
So our construction obeys the Lindblad  equation. 
\end{proof}

Note that the constructed $H^\pi(t)$ may become very large if two eigenvalues become very close. If the eigenvalues actually cross however, the Hamiltonian is well-behaved. There are only certain $\pi$ that allow an eigenvalue crossing, and trying to approach a crossing with an illegal $\pi$ requires an infinite energy cost. Note that orbits with repeated eigenvalues must fall on the boundary of $\mathcal{T}_Q$, so if we only require that we steer arbitrarily close to such an orbit, we can ignore the first condition, since nearby points are in the interior where the condition does not apply. 

We now explore the implications of eliminating the second condition. If we construct a trajectory $(\Lambda(t),\pi(t))$ with the desired initial and final $\Lambda$, but with an undesired initial and final $\pi$, we can book-end the trajectory with fast unitary transformations. Say we have initial and target density operators $\rho_i$ and $\rho_T$. We are able to construct $\Lambda(t)$ and $\pi(t)$ on the interval $[0,T]$ that brings $\rho_1$ to $\rho_2$, where there are skew-symmetric matrices $-ih_i$ and $-ih_T$ such that $\rho_1 = e^{-ih_i}\rho_ie^{ih_i}$ and $\rho_2 = e^{-ih_T}\rho_Te^{ih_T}$. Then we can construct the following motion on the interval $[-\Delta, T+\Delta]$: 
\begin{align}
t\in [-\Delta,0] \left\{
\begin{array}{c} 
  \rho_i\rightarrow\bar{\rho}_1 \\
H(t) = h_i/\Delta 
\end{array} \right.  \\
t\in [0,T] \left\{
\begin{array}{c} 
  \bar{\rho}_1\rightarrow\bar{\rho}_2 \\
H(t) = H^\pi(t) 
\end{array} \right.  \\
t\in [T,T+\Delta] \left\{
\begin{array}{c} 
  \bar{\rho}_2\rightarrow\rho_f \\
H(t) = h_T/\Delta. 
\end{array} \right.  
\end{align}
Let $\rho_a(t)$ denote our ideal trajectory $\rho_i\rightarrow\rho_1\rightarrow\rho_2\rightarrow\rho_T$ and $\rho_b(t)$ the actual trajectory $\rho_i\rightarrow\bar{\rho}_1\rightarrow\bar{\rho}_2\rightarrow\rho_f$. To measure distance between density operators, we will use the trace distance\footnote{See \cite{BengtssonZyczkowskiBook} for other distance measures for density matrices.}: 
\begin{align}
d(\rho_a,\rho_b) = \frac{1}{2}Tr(\sqrt{(\rho_a-\rho_b)^2}) = \frac{1}{2}\sum_{k=1}^n |\lambda^\delta_k|,
\end{align}
where $\lambda^\delta_k$ are the (real) eigenvalues of $\rho_a-\rho_b$. 

\begin{proposition}
Exact controllability in the $\Lambda$-control system without condition (2) implies approximate controllability in the $\rho$-control system. That is, if there is a $\pi(t)$ on $[0,T]$ that brings $\Lambda_{\rho_i}$ to $\Lambda_{\rho_T}$, then there is a Hamiltonian $H(t)$ on $[-\Delta, T+\Delta]$ that brings $\rho_i$ to $\rho_f$ such that $d(\rho_f,\rho_T)\le C \Delta $, where the constant $C$ is universal for all initial and final density operators.  
\end{proposition}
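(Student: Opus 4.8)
The plan is to estimate, one phase at a time, the trace distance between the actual trajectory $\rho_b(t)$ constructed above and the ideal trajectory $\rho_a(t)$, using two standard facts. First, $\mathcal{L}_D$ is a \emph{bounded} super-operator: there is a constant $\kappa$ --- one may take $\kappa = 2\sum_k ||L_k||^2$ with $||\cdot||$ the operator norm --- such that $||\mathcal{L}_D(\sigma)||_1 \le \kappa\,||\sigma||_1$ for all $\sigma$, where $||\cdot||_1$ is the trace norm, so that $d(\sigma_1,\sigma_2)=\tfrac12||\sigma_1-\sigma_2||_1$ and $||\rho||_1=1$ for a density operator. Second, every Lindblad propagator is completely positive and trace-preserving, hence contracts the trace distance, $d(\Phi\sigma_1,\Phi\sigma_2)\le d(\sigma_1,\sigma_2)$ (see \cite{BengtssonZyczkowskiBook}).

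First I would bound the error from each fast-unitary book-end with a variation-of-parameters (Duhamel) identity. On $[-\Delta,0]$, writing $\Phi_{\mathcal H}(\cdot,\cdot)$ for the unitary-conjugation propagator (an isometry for $||\cdot||_1$) and $U_i$ for the unitary it realizes over that interval, one has $\rho_b(0) = U_i\rho_i U_i^\dagger + \int_{-\Delta}^0 \Phi_{\mathcal H}(0,s)\,\mathcal{L}_D(\rho_b(s))\,ds$; since $\rho_b(s)$ is a density operator for every $s$, the integral has trace norm at most $\kappa\Delta$, so $d(\bar\rho_1,\rho_1)\le\tfrac12\kappa\Delta$ with $\rho_1:=U_i\rho_i U_i^\dagger$. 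The same computation on $[T,T+\Delta]$ gives $d(\rho_f,\,U_T\bar\rho_2 U_T^\dagger)\le\tfrac12\kappa\Delta$, where $U_T$ is the third-phase unitary. By the reconstruction proposition, $\rho_1 = \sum_j\lambda_j(0)\pi_j(0)$ has spectrum $\Lambda_{\rho_i}$ and $\rho_2:=\sum_j\lambda_j(T)\pi_j(T)$ has spectrum $\Lambda_{\rho_T}$, so $\rho_1$ and $\rho_i$ (resp.\ $\rho_2$ and $\rho_T$) are unitarily equivalent and $U_i$, $U_T$ (the latter chosen with $U_T\rho_2 U_T^\dagger=\rho_T$) exist.

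Next I would propagate the initial error through the middle phase. There $\rho_a$ and $\rho_b$ solve the \emph{same} Lindblad equation with Hamiltonian $H^\pi(t)$, which by the preceding proposition is piecewise-continuous on the compact interval $[0,T]$, hence bounded, so its time-ordered propagator $\Phi_{H^\pi}(T,0)$ is a genuine CPTP map; contractivity then gives $d(\bar\rho_2,\rho_2)\le d(\bar\rho_1,\rho_1)\le\tfrac12\kappa\Delta$, using $\Phi_{H^\pi}(T,0)\rho_1=\rho_2$. Combining this with the triangle inequality and the unitary invariance of $d$, $d(\rho_f,\rho_T)\le d(\rho_f,U_T\bar\rho_2 U_T^\dagger)+d(\bar\rho_2,\rho_2)\le\kappa\Delta$, so $C=\kappa$ works and depends only on the Lindblad operators --- not on $\rho_i$, $\rho_T$, the chosen $\Lambda$-trajectory, or even $T$.

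The step I expect to carry the weight is the middle-phase estimate. A crude Gronwall bound for the linear equation $\delta'=[-iH^\pi,\delta]+\mathcal{L}_D(\delta)$ satisfied by $\delta=\rho_b-\rho_a$ only yields $d(\bar\rho_2,\rho_2)\le e^{\kappa T}d(\bar\rho_1,\rho_1)$, which is still $O(\Delta)$ but with a $T$-dependent prefactor; it is precisely the contractivity of the trace distance under the Lindblad flow that lets the constant be taken universal. The one adjacent point needing care is that $\Phi_{H^\pi}(T,0)$ is genuinely CPTP: at a true eigenvalue crossing, condition (1) of the $\Lambda$-control system forces the numerator of $H_B^\pi$ to vanish at least as fast as the denominator, so $H^\pi$ stays continuous there, while a mere near-crossing only makes $H^\pi$ large but bounded on the compact time interval, so the time-ordered exponential exists and is a quantum channel.
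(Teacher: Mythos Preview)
Your proof is correct, and the overall three-phase architecture matches the paper's. The techniques inside each phase differ, however, and your choices are arguably cleaner.

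For the two book-ends, the paper tracks the derivative of the trace distance by differentiating the eigenvalues $\lambda_k^\delta$ of $\rho_a-\rho_b$ (invoking Theorem~\ref{thm1}), observes that the large Hamiltonian term $[-ih/\Delta,\rho_a-\rho_b]$ has zero diagonal in the eigenbasis of $\rho_a-\rho_b$ and hence does not contribute to $\sum_k|\lambda_k^{\delta\prime}|$, and then bounds each surviving $\mu_k^\delta$ by $\sup|\mathcal{L}_D(\rho_b)|$. Summing over $k$ introduces an extra factor of $n$, yielding $C=2n\sum_m|L_m|^2$. Your Duhamel argument bypasses the eigenvalue bookkeeping entirely: because the unitary propagator is a trace-norm isometry, the error is exactly the integral of $\mathcal{L}_D(\rho_b(s))$ in trace norm, giving the sharper constant $C=\kappa=2\sum_k\|L_k\|^2$ with no dimension dependence.

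For the middle phase, the paper also uses contractivity but proves it in situ by adapting equation~(\ref{LambODE}) to $\rho_a-\rho_b$ and computing $d'(\rho_a,\rho_b)\le 0$ from the column-sum-zero structure of $\Omega^{\pi^\delta}$. You instead appeal directly to the CPTP property of the Lindblad propagator, which is shorter and standard; your remark that a naive Gronwall estimate would only give $e^{\kappa T}$ and that contractivity is precisely what makes $C$ independent of $T$ is a good diagnostic of where the content lies. Your caveat about $H^\pi$ remaining bounded on $[0,T]$ (via condition~(1) at genuine crossings, and compactness otherwise) is the right justification that $\Phi_{H^\pi}(T,0)$ is a bona fide channel.
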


\begin{proof}
To begin, we note that the time-derivative of the distance is $d^\prime(\rho_a,\rho_b) = \sum_{k \in C^\delta(t)}\lambda^{\delta\prime}_k$, where $C^\delta(t)$ is the subset of indices such that $\lambda_k^\delta >0$. If one or more eigenvalues are zero with non-zero derivative, the metric has different left- and right-side derivatives. In this case, define $C^\delta(t-)$ to include  indices for zero and decreasing eigenvalues, and $C^\delta(t+)$ to include indices for zero and increasing eigenvalues. We know that the eigenvalues are differentiable, since theorem \ref{thm1} can be applied with minimal modification to $\rho_a-\rho_b$. 

Now for the first part of the trajectory:
\begin{align}
d(\rho_1,\bar{\rho}_1) &\le \Delta \cdot \sup_{-\Delta\le t\le 0} |d^\prime(\rho_a(t),\rho_b(t)) |\\
&\le \frac{\Delta}{2} \cdot \sup_{-\Delta\le t\le 0} \sum_{k=1}^n | \lambda^{\delta\prime}_k(t) |\\
&= \frac{\Delta}{2} \cdot \sup_{-\Delta\le t\le 0} \sum_{k=1}^n | \mu^{\delta}_k(t) |,
\end{align}
where the $\mu^\delta_k$ are eigenvalues of $(\rho_a^\prime-\rho_b^\prime)$ projected onto its different eigenspaces. Now $(\rho_a^\prime-\rho_b^\prime) = [-ih_i/\Delta,\rho_a-\rho_b] + \mathcal{L}_D(\rho_b)$. The Hamiltonian piece projected onto its eigenspaces vanishes, so we are left with only the dissipative piece. It follows that $\mu^\delta_k \le \sup_{-\Delta\le t\le 0} |\mathcal{L}_D(\rho_b(t))|\le 2\sum_{m=1}^N|L_m|^2$. So $d(\rho_1,\bar{\rho}_1) \le n\Delta\sum_{m}|L_m|^2$.

The middle piece of the trajectory causes no problems, since both $\rho_a$ and $\rho_b$ experience the same dynamics, and the Lindblad equation is known to be contractive \cite{Lindblad76}. We can adapt equation (\ref{LambODE}) for $(\rho_a-\rho_b)$ instead of $\rho$, where $\Lambda^\delta$ and $\pi^\delta$ replace $\Lambda$ and $\pi$ (this can be done since the positive semi-definiteness is not invoked in the proof). On the interval $[0,T]$, we have:
\begin{align}
d^\prime (\rho_a,\rho_b) &= \sum_{k\in C^\delta}\lambda_k^{\delta\prime} = \sum_{k\in C^\delta}\sum_{l=1}^n \Omega^{\pi^\delta}_{kl}\lambda^\delta_l \\
&= \left(\sum_{k, l\in C^\delta} + \sum_{k\in C^\delta, l\not\in C^\delta}\right)\Omega^{\pi^\delta}_{kl}\lambda^\delta_l \\
&= - \sum_{k\not\in C^\delta, l\in C^\delta}w^\delta_{kl}\lambda^\delta_l
 + \sum_{k\in C^\delta, l\not\in C^\delta}w^\delta_{kl}\lambda^\delta_l \\
&= - \sum_{k\not\in C^\delta, l\in C^\delta}w^\delta_{kl}|\lambda^\delta_l|
 - \sum_{k\in C^\delta, l\not\in C^\delta}w^\delta_{kl}|\lambda^\delta_l| \\
& \le 0,
\end{align} 
where in the third line, first sum, we have used the fact that the column-sums of $\Omega^{\pi^\delta}$ are zero. 

So $|\rho_2-\bar{\rho}_2| \le |\rho_1-\bar{\rho}_1|$. To finish, we have:
\begin{align}
d(\rho_f,\rho_T) &\le d(\rho_2,\bar{\rho}_2) + \Delta \cdot \sup_{T \le t\le T+\Delta} |d^\prime(\rho_a(t),\rho_b(t)) | \\
& \le 2n\Delta\sum_{m=1}^N|L_m|^2.
\end{align}
The multiplicative constant $2n\sum_{m=1}^N|L_m|^2$ is independent of $\rho_i$ and $\rho_f$. 
\end{proof}

\begin{corollary}
If we expand the $\Lambda$-control system to allow piecewise-differentiable $\pi(t)$ with a finite number of discontinuities, the final density operator corresponding to the final $\Lambda$ can be reached within an arbitrarily small error.
\end{corollary}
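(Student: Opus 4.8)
The plan is to reduce the statement to a finite iteration of the book-ending construction from the proposition immediately above. Let $0=t_0<t_1<\cdots<t_K<t_{K+1}=T$ enumerate the finitely many times at which $\pi(t)$ jumps, so that on each open interval $(t_j,t_{j+1})$ the flag varies in the piecewise-differentiable sense of Definition~\ref{LCS}, and write $\pi(t_j^\pm)$ for the one-sided limits. Because the eigenvalue vector $\Lambda(t)$ obeys the linear ODE (\ref{xeq}) it is continuous through each $t_j$, so the ideal pre- and post-jump operators $\rho_a(t_j^-)=\sum_k\lambda_k(t_j)\pi_k(t_j^-)$ and $\rho_a(t_j^+)=\sum_k\lambda_k(t_j)\pi_k(t_j^+)$ have the same spectrum and hence are unitarily equivalent. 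Choosing a Hermitian $h_j$ with $e^{-ih_j}\pi_k(t_j^-)e^{ih_j}=\pi_k(t_j^+)$ for all $k$ (possible since $\exp$ maps onto $U(n)$), I would interpolate the $j$th jump by inserting a short interval of length $\Delta$ carrying the constant Hamiltonian $h_j/\Delta$, exactly as in that proof. On the continuous pieces I would use the reconstructed Hamiltonian $H^\pi(t)$ of (\ref{ham}), which is piecewise-continuous and hence admissible in the $\rho$-control system; if the initial and/or target flags are not prescribed one also prepends and appends the two endpoint book-ends of the preceding proposition. The result is a genuine $\rho$-control trajectory $\rho_b(t)$ on a time interval only slightly longer than $[0,T]$.

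Next I would bound the error $e(t):=d(\rho_a(t),\rho_b(t))$, where $\rho_a$ is the exact ideal trajectory — it solves the Lindblad equation with $H^\pi(t)$ on the continuous pieces and performs the pure unitary $s\mapsto e^{-ih_j s/\Delta}$ on the $j$th inserted interval, arriving exactly at $\rho_a(t_j^+)$ at its end. On a continuous piece, $\rho_a$ and $\rho_b$ solve the \emph{same} Lindblad equation with different initial data, so $e$ is non-increasing there; this is the contractivity computation already carried out in the preceding proposition, adapting (\ref{LambODE}) to $\rho_a-\rho_b$ and using that $\Omega^{\pi^\delta}$ has vanishing column sums (positive semi-definiteness of $\rho_a-\rho_b$ is not needed). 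On the $j$th inserted interval one has $\rho_a'-\rho_b'=[-ih_j/\Delta,\rho_a-\rho_b]\pm\mathcal{L}_D(\rho_b)$, and the commutator contributes nothing to the eigenvalues of $\rho_a'-\rho_b'$ projected onto the eigenspaces of $\rho_a-\rho_b$, so exactly as in the first step of that proof $e$ grows by at most $n\Delta\sum_m||L_m||^2$ across the interval.

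Summing over the at most $K+2$ inserted intervals and using $e=0$ at the start, this gives $e\le(K+2)\,n\Delta\sum_m||L_m||^2$ at the final time. Since $\rho_a$ terminates exactly at the density operator with the prescribed final spectrum, letting $\Delta\to0$ with $K$ held fixed forces $d(\rho_b,\rho_T)\to0$ at the end, which is the claim.

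The only genuinely new point beyond the preceding proposition — and the step I expect to require the most care — is showing that the errors from the $K$ interior jumps \emph{add} rather than compound. This is exactly what contractivity of the Lindblad flow on the intervening continuous pieces delivers: an $O(\Delta)$ deviation incurred at one jump is not amplified before the next jump contributes its own $O(\Delta)$. Finiteness of the number of discontinuities is essential here; with infinitely many jumps one could no longer fix a single $\Delta$ making the accumulated error uniformly small.
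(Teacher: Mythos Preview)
Your proposal is correct and follows exactly the approach the paper itself takes: the paper's proof is a one-sentence remark that the corollary ``is merely an extension of the previous lemma, where instead of book-ending one continuous trajectory with fast unitary transformations, we are intersplicing a finite number of fast unitary transformations at the discontinuities.'' You have supplied the details this sentence elides --- the enumeration of the jump times, the per-jump error bound of $n\Delta\sum_m|L_m|^2$, the contractivity argument on the continuous pieces, and the resulting additive (rather than multiplicative) accumulation of errors --- all of which are straightforward applications of the machinery already built in the preceding proposition.
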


\begin{proof}
This is merely an extension of the previous lemma, where instead of book-ending one continuous trajectory with fast unitary transformations, we are intersplicing a finite number of fast unitary transformations at the discontinuities. 
\end{proof}

While the conditions in the definition of the $\Lambda$-control system are necessary for planning trajectories in $\rho$-space and their corresponding Hamiltonians, they can be disregarded when analyzing controllability. This will be made clearer in the next section; for now, we define the following control system:

\begin{definition}
The \emph{unconstrained $\Lambda$-control system} is the linear ODE (\ref{xeq}), together with a piecewise-differentiable control flag $\pi(t)$, with a finite number of possible discontinuities.   
\end{definition} 

Because the control set of the $\Lambda$-control system is a non-Euclidean manifold, it is not trivial to use standard control-theoretic results for the projected system. However, if we view the elements $w_{ij}^\pi$ as controls, we are left with a bi-linear control system, since $\Omega^\pi$ is linear in these elements. Define the map $w: \mathbb{F}\rightarrow \mathbb{R}^{n^2-n}$ that sends $\pi$ to the corresponding vector of $w_{jk}^\pi$. Note that $w(\mathbb{F})$ is a closed and bounded set in $\mathbb{R}^{n^2-n}$. Also define $\Omega(w)$, $w\in\mathbb{R}^{n^2-n}$ to be the matrix with off-diagonal elements equal to $w_{jk}$ and diagonal elements equal to $-\sum_{l \ne j} w_{lk}$. Define the following control system, which is the unconstrained $\Lambda$-control system with a transformation:

\begin{definition}
The $w$-control system is the bi-linear ODE $\frac{d}{dt}\Lambda = \Omega(w)\Lambda$ on $\mathcal{T}$. The control set is $w(\mathbb{F})$ and control functions must be piecewise-differentiable, with a finite number of discontinuities.
\end{definition}

The derivatives of $w$ are, where $h\in T_\pi\mathbb{F}\subset \mathfrak{su}(n)$:
\begin{align}
dw_{jk}(\pi)\cdot h  &= \sum_{l=1}^N\pi_j [L_l, h]\pi_k L_l^\dagger\pi_j +\pi_j L_l\pi_k[L_l^\dagger,h]\pi_j \\
w'_{jk}(t) &= \sum_{l=1}^N\pi_j(t) [L_l, \pi'(t)]\pi_k(t) L_l^\dagger\pi_j(t)\nonumber \\
 &+\pi_j(t) L_l\pi(t)_k[L_l^\dagger,\pi'(t)]\pi_j(t). \label{w-deriv}
\end{align}
Since $w(t)$ is confined to $w(\mathbb{F})$, $w'$ must be in the image of $dw(\pi)$, and therefore we can recover $\pi^\prime(t)$ and therefore differentiable $\pi(t)$ from $w^\prime(t)$ and $w(t)$. It follows that the $w$-control system is equivalent to the unconstrained $\Lambda$-system. The difficulty in analyzing the $w$-control system is understanding the structure of the control set $w(\mathbb{F})$. 

\section{Local Controllability Analysis}

In the remainder of this paper, we wish to examine the controllability of the $\Lambda$-control system. We will restrict ourselves to local controllability, as this simplifies the analysis somewhat:

\begin{definition}
A control system is \emph{locally controllable (LC)} \cite{SontagBook} in time $T$ at a point $p$ if for every neighborhood $V$ of $p$, $V$ contains another neighborhood $W$ such that $\forall y,z\in W$, $y$ can be controlled to $z$ in time $T$. The system is \emph{strongly locally controllable (SLC)} if a $W$ can be found for any $V$ such that $\forall y,z\in W$, $y$ can be controlled to $z$ without leaving $W$. 
\end{definition}

In plain terms, local controllability guarantees a trajectory between two local points, while strong local controllability demands this trajectory also be local. We will give a sufficient condition for SLC in both the unconstrained and constrained $\Lambda$-control system. First define $\mathcal{V}_u(\Lambda)=\{\Omega^\pi\Lambda: \pi\in\mathbb{F}\}$ and  $\mathcal{V}_c(\Lambda)=\{\Omega^\pi\Lambda: \pi\in\mathbb{F}^\Lambda\}$. These are the possible tangent vectors $\frac{d}{dt}\Lambda$ available at $\Lambda$ for the unconstrained and constrained systems. Here, $int$ denotes ``interior" and $co$ ``convex hull":

\begin{proposition}
If $0\in \textrm{int co } \mathcal{V}_u(\Lambda)$, then both the unconstrained and constrained $\Lambda$-systems are SLC at $\Lambda$. If $0\not\in \textrm{co } \mathcal{V}_u(\Lambda)$, neither are LC at $\Lambda$. \label{SLC}
\end{proposition}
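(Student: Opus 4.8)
The plan is to treat this as a local controllability statement for a control-affine (in fact bi-linear) system whose admissible velocity set at $\Lambda$ is $\mathcal{V}_u(\Lambda)$ (for the unconstrained system) or $\mathcal{V}_c(\Lambda)$ (for the constrained one). The key structural fact I would exploit is that, because the control set $\mathbb{F}$ (resp. $\mathbb{F}^\Lambda$) is compact and we may switch between control values arbitrarily fast with piecewise-differentiable $\pi(t)$, the \emph{reachable} velocities are effectively the convex hull of $\mathcal{V}_u(\Lambda)$: by rapid chattering between finitely many flags $\pi^{(1)},\dots,\pi^{(m)}$ with dwell-time fractions $\theta_i\ge 0$, $\sum\theta_i=1$, the averaged motion follows $\dot\Lambda = \sum_i\theta_i\Omega^{\pi^{(i)}}\Lambda$, any convex combination of points in $\mathcal{V}_u(\Lambda)$. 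A Filippov/relaxation-type argument then shows the true trajectories approximate these relaxed trajectories uniformly on small time intervals, with error $o(T)$ as $T\to 0$.

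For the positive (SLC) direction: assume $0\in\textrm{int co }\mathcal{V}_u(\Lambda_0)$. Since $\Lambda\mapsto\Omega^\pi\Lambda$ is continuous (jointly in $\Lambda$ and $\pi$) and $\mathbb{F}$ is compact, $0$ remains in the interior of $\textrm{co }\mathcal{V}_u(\Lambda)$ for all $\Lambda$ in some neighborhood $W_0$ of $\Lambda_0$, with a uniform inner radius $r>0$; in particular $\textrm{co }\mathcal{V}_u(\Lambda)$ contains a ball $B_r(0)$ in the tangent space $T_\Lambda\mathcal{T}$ (the relevant tangent space is the hyperplane $\sum_j\dot\lambda_j=0$, on which $\Omega^\pi\Lambda$ indeed lives by the corollary's rank-deficiency remark). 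Then for any target displacement $\delta$ with $\|\delta\|$ small, the relaxed system can realize the constant velocity $\delta/T$ (for $T$ large enough that $\|\delta\|/T<r$, i.e. any prescribed small $T$ works once $\|\delta\|$ is small enough), so the relaxed system steers $y$ to $z$ in time $T$ without leaving $W_0$; de-relaxing via chattering gives an honest trajectory reaching a point within $o(T)$ of $z$, and a standard fixed-point / open-mapping correction (using that the velocity set robustly contains $B_r(0)$, so the relaxed end-point map is an open map onto a neighborhood) upgrades this to exact arrival at $z$ while staying in a slightly larger $W\subset V$. The same argument applies verbatim to the constrained system \emph{provided} one checks that the relevant convex hull is still full-dimensional; here one uses the hypothesis is stated in terms of $\mathcal{V}_u$, and the key point to verify is that $\textrm{co }\mathcal{V}_c(\Lambda)\supseteq\textrm{co }\mathcal{V}_u(\Lambda)$ near $\Lambda_0$ in the directions that matter — at a generic $\Lambda_0$ with simple spectrum $\mathbb{F}^\Lambda=\mathbb{F}$ so there is nothing to check, and near crossings one argues that condition (1) only removes flags whose contribution to $\Omega^\pi\Lambda$ degenerates, so the convex hull is unchanged on the crossing stratum.

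For the negative direction: assume $0\notin\textrm{co }\mathcal{V}_u(\Lambda_0)$. By the separating-hyperplane theorem there is a linear functional $c$ on $T_{\Lambda_0}\mathcal{T}$ and $\varepsilon>0$ with $\langle c,\Omega^\pi\Lambda_0\rangle\ge\varepsilon$ for all $\pi\in\mathbb{F}$; extend $c$ to $\mathbb{R}^n$ and, by continuity and compactness of $\mathbb{F}$, get $\langle c,\Omega^\pi\Lambda\rangle\ge\varepsilon/2$ for all $\pi$ and all $\Lambda$ in a neighborhood $W_1$ of $\Lambda_0$. Then along any admissible trajectory staying in $W_1$, $\frac{d}{dt}\langle c,\Lambda(t)\rangle\ge\varepsilon/2>0$, so $\langle c,\Lambda\rangle$ is strictly increasing: no point can be steered to any point with a strictly smaller value of $\langle c,\Lambda\rangle$ without first exiting $W_1$, and since every neighborhood of $\Lambda_0$ contains such pairs, the system is not LC at $\Lambda_0$. (Since $\mathbb{F}^\Lambda\subseteq\mathbb{F}$, the same functional obstructs the constrained system a fortiori.)

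The main obstacle I expect is the positive direction's de-relaxation/exactness step: making rigorous that fast switching between flags produces a genuine $C^1$ (piecewise-differentiable) $\pi(t)$ whose trajectory tracks the relaxed one, \emph{and} that one can hit the target exactly rather than approximately. The cleanest route is probably to invoke a relaxation theorem (the reachable set of the relaxed system equals the closure of the reachable set of the original) together with an open-mapping argument: since $0$ is in the \emph{interior} of the convex velocity set uniformly near $\Lambda_0$, the time-$T$ relaxed reachable set from $y$ contains a full neighborhood of $y$ whose size is $\gtrsim rT$, and a Brouwer-degree argument pushes exact reachability from the relaxed to the original system. A secondary subtlety is the boundary behavior — if $\Lambda_0$ lies on $\partial\mathcal{T}$ (some $\lambda_j=0$ or some crossing), one must make sure the constructed trajectories stay inside $\mathcal{T}$; this is where the hypothesis $0\in\textrm{int co }\mathcal{V}_u$, interpreted relative to the tangent cone of $\mathcal{T}$ at $\Lambda_0$, does the work, and I would state the interior condition with respect to that tangent cone when $\Lambda_0\in\partial\mathcal{T}$.
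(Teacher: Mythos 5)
Your outline of the unconstrained positive direction (relaxation/chattering to access the convex hull of velocities, then an open-mapping/degree argument to hit the target exactly) is a correct but more laborious route to what the paper obtains directly by citing Sontag's Lemma 3.8.5 and its corollary; the two are equivalent in spirit, and your negative direction (separating hyperplane plus bounded velocities trapping short-time trajectories in a half-space) is also the paper's argument, modulo some quantitative bookkeeping.

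The genuine gap is in your handling of the \emph{constrained} system at a repeated eigenvalue. You write that ``condition (1) only removes flags whose contribution to $\Omega^\pi\Lambda$ degenerates, so the convex hull is unchanged,'' but this is not a valid reason, and as stated it is false: the flags excluded from $\mathbb{F}^\Lambda$ at a crossing are precisely those that fail to diagonalize $(\rho')_{\alpha\alpha}$ within the degenerate block, and these flags produce perfectly nondegenerate tangent vectors $\Omega^\pi\Lambda$ that are generically \emph{not} equal to those from allowed flags. What is true --- and what must be proved --- is that these excluded tangent vectors all lie in the convex hull of the constrained ones, so that $\textrm{co}\,\mathcal{V}_c(\Lambda)=\textrm{co}\,\mathcal{V}_u(\Lambda)$. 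The paper establishes exactly this via the Schur--Horn theorem (extended to direct sums over the degenerate eigenspaces): the possible vectors $\bigl(\textrm{Tr}(\pi_j\rho'\pi_j)\bigr)_j$ as $\{\pi_j\}$ ranges over all refinements of $P_\alpha$ fill out the permutohedron of the eigenvalues of $(\rho')_{\alpha\alpha}$, whose vertices are attained precisely by the diagonalizing (constrained) flags. Without this or an equivalent argument, your step from SLC for the unconstrained system to SLC for the constrained one does not go through. At a $\Lambda$ with simple spectrum you are of course right that $\mathbb{F}^\Lambda=\mathbb{F}$ and there is nothing to check, but the proposition is also asserted at degenerate $\Lambda$, including the completely mixed state where $\mathbb{F}^\Lambda$ is a single point.
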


\begin{proof}
The first part is an application of Lemma 3.8.5 and its corollary from \cite{SontagBook}, which states that if $0$ lies in the interior of the convex hull of the set of available tangent vectors, then the system is SLC. The wrinkle we must deal with is showing that the SLC extends to the constrained system, despite the smaller control set. 

For the constrained system, we claim that $\textrm{co }\mathcal{V}_u(\Lambda) = \textrm{co }\mathcal{V}_c(\Lambda)$, which if true yields the desired result. Our claim follows from the Schur-Horn theorem \cite{Schur}\cite{Horn}, which states that for any Hermitian operator $A$, $\{diag(UAU^\dagger): U\in U(n)\} = \textrm{co }\{\sigma.\Gamma_A: \sigma\in S_n\} $. Here $diag()$ denotes the vector of diagonal elements, $\sigma .\Gamma$ denotes $\Gamma$ with elements permuted with $\sigma\in S_n$, and $\Gamma_A$ denotes the vector of eigenvalues of $A$. This can be extended to direct sums: for any set of Hermitian operators $A_\alpha$, $\{\bigoplus_\alpha diag(U_\alpha A_\alpha U_\alpha^\dagger): U_\alpha \in U(n_\alpha)\} = co(\{\bigoplus_
\alpha \sigma.\Gamma_{A_\alpha}: \sigma\in S_{n_\alpha}\})$. In our case we use $A_\alpha = (\rho^\prime)_{\alpha\alpha}$. Then we have:
\begin{align}
\textrm{co }\mathcal{V}_c(\Lambda) &= \textrm{co }\{\bigoplus_\alpha \sigma_\alpha.\Gamma_{\rho'_{\alpha\alpha}}: \sigma_\alpha\in S_{m_\alpha}, P_\alpha \in\mathbb{F}^\Lambda \} \\
= \{\bigoplus_\alpha & diag( U_\alpha P_\alpha \rho'P_\alpha U_\alpha^\dagger): U_\alpha\in U(m_\alpha), P_\alpha \in\mathbb{F}^\Lambda \}  \\
&= \{\bigoplus_j diag( \pi_j\rho'\pi_j): \pi \in\mathbb{F} \}  \\
&= \textrm{co }\{ \sigma.\Gamma_{\rho'}: \sigma\in S_n \}  = \textrm{co }\mathcal{V}_u(\Lambda).
\end{align}
In the second and fourth lines, we apply the Schur-Horn theorem. In the third line, we recognize the set of all diagonal vectors of $\rho'_{\alpha\alpha}$ is equal to the set of all possible $\bigoplus_{j\in C_\alpha} \pi_j\rho'\pi_j$ for $\sum_{j\in C_\alpha}\pi_j = P_\alpha$.  

To show the second part of the proposition, note that $\textrm{co }\mathcal{V}_u(\Lambda)$ is compact, since $w(\mathbb{F})$ and thus $\mathcal{V}_u(\Lambda)$ is compact, and the convex hull of a compact set in $\mathbb{R}^n$ is compact. Suppose at some $\Lambda=\Lambda_0$, $0\not\in\textrm{int co }\mathcal{V}_u(\Lambda_0)$. Due to the compactness and convexity, there is a unique point $v_m \in \partial \textrm{ co }\mathcal{V}_u(\Lambda_0)\subset\mathbb{R}^n$ with minimal magnitude, and this fixes a hyperplane passing through $\Lambda_0$ that is orthogonal to $v_m$. The magnitude of this vector as $\Lambda$ varies cannot vary more than $C_\Omega|\delta \Lambda|$, where $C_\Omega = \sup |\Omega(w(\mathbb{F}))|$. Due to compactness, there is also a point $v_M$, not necessarily unique, of maximal magnitude. If we define $\tau = \frac{|v_m|}{2|v_M| C_\Omega}$, then $R_{\tau}(\Lambda_0)$ falls entirely on one side of the hyperplane and thus cannot contain zero. This is because:
\begin{align}
\left(\Lambda(\tau)-\Lambda_0\right) &\cdot \frac{v_m}{|v_m|} \ge \tau \inf_{t\in[0,\tau]} (\Lambda^\prime(t) \cdot \frac{v_m}{|v_m|}) \\
&\ge \tau (|v_m| - C_\Omega \sup_{t\in [0,\tau]}|\Lambda^\prime(t)|\tau  ) \\
&\ge \tau (|v_m| - C_\Omega |v_M |\tau  ) = \frac{1}{2} \tau |v_m| > 0.
\end{align}
It follows that $LC$ does not hold at $\Lambda_0$. 
\end{proof}

Analyzing the local controllability of the $\Lambda$-system requires studying $\mathcal{V}_u(\Lambda)$. For general $\Lambda\in\mathcal{T}$, this is difficult, but at the completely mixed state, its structure simplifies greatly, as it is the convex hull of a finite set of vectors:

\begin{proposition}
$\mathcal{V}_u(\iota) = \textrm{co }\{\sigma.\Gamma_{A_\iota}: \sigma\in S_n\}$, where $A_\iota$ is the operator $\sum_k[L_k,L_k^\dagger]$. 
\end{proposition}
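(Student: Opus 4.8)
The plan is to evaluate $\Omega^\pi\Lambda$ at $\Lambda=\iota$ using the formula for $\Omega^\pi$ from the Proposition above, and then recognize the resulting family of vectors, as $\pi$ ranges over all complete flags, as the Schur--Horn polytope of a single Hermitian operator. Throughout I would use that $\mathbb{F}$ consists of tuples $\pi=(\pi_1,\dots,\pi_n)$ of rank-one orthogonal projectors with $\sum_k\pi_k=I_n$, together with the cyclicity of the trace.

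First I would write out the $j$-th component of $\Omega^\pi\iota$. Since every $\lambda_k=1/n$, it equals $\tfrac1n\big(\sum_{k\ne j}w_{jk}^\pi-\sum_{l\ne j}w_{lj}^\pi\big)$; the diagonal entries $w_{jj}^\pi$ needed to complete both sums to sums over all indices cancel, leaving $\tfrac1n\big(\sum_k w_{jk}^\pi-\sum_i w_{ij}^\pi\big)$. Substituting $\sum_k\pi_k=I_n$ into $w_{jk}^\pi=\sum_l\textrm{Tr}(\pi_j L_l\pi_k L_l^\dagger)$ collapses the first sum to $\sum_l\textrm{Tr}(\pi_j L_lL_l^\dagger)$, and likewise the second to $\sum_l\textrm{Tr}(L_l\pi_j L_l^\dagger)=\sum_l\textrm{Tr}(\pi_j L_l^\dagger L_l)$ by cyclicity. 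Hence $(\Omega^\pi\iota)_j=\tfrac1n\,\textrm{Tr}\big(\pi_j\sum_l[L_l,L_l^\dagger]\big)=\tfrac1n\,\textrm{Tr}(\pi_j A_\iota)$, the overall $1/n$ being harmless or absorbable into the normalization of $A_\iota$.

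Next I would observe that $A_\iota=\sum_l[L_l,L_l^\dagger]$ is Hermitian and traceless, so $\Gamma_{A_\iota}$ is real with zero component-sum, consistent with $\Omega^\pi\iota$ being tangent to $\mathcal{T}$. For a complete flag $\pi$, picking a unit vector spanning each $\textrm{ran }\pi_j$ yields an orthonormal basis, and $(\textrm{Tr}(\pi_1 A_\iota),\dots,\textrm{Tr}(\pi_n A_\iota))$ is exactly $\textrm{diag}(UA_\iota U^\dagger)$ for the unitary $U$ carrying the standard basis to that one; conversely every $U\in U(n)$ arises this way. Therefore $\mathcal{V}_u(\iota)=\tfrac1n\{\textrm{diag}(UA_\iota U^\dagger):U\in U(n)\}$, and the Schur--Horn theorem---already invoked in the proof of Proposition~\ref{SLC}---identifies this set with $\tfrac1n\,\textrm{co}\{\sigma.\Gamma_{A_\iota}:\sigma\in S_n\}$, which is the claimed finite-vertex description.

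I do not anticipate a genuine obstacle: the content is the trace bookkeeping that makes the $w_{jj}^\pi$ terms drop out and telescopes the remaining sums via $\sum_k\pi_k=I_n$, plus the recognition that the flag-to-diagonal-vector map is precisely the setting of Schur--Horn (it is \emph{completeness} of the flag that forces the projectors to be rank one, which is what the theorem requires). The only point to watch is the scalar $1/n$ relating the computed tangent vectors to $\Gamma_{A_\iota}$, which I would either carry along explicitly or fold into the definition of $A_\iota$.
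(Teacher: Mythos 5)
Your proof is correct and takes essentially the same route as the paper's: the paper observes that at $\rho=\tfrac1n I$ one has $\rho'=\tfrac1n A_\iota$, so the available tangent vectors $(\textrm{Tr}(\pi_1\rho'\pi_1),\dots,\textrm{Tr}(\pi_n\rho'\pi_n))$ range over $\tfrac1n\,\textrm{diag}(UA_\iota U^\dagger)$ and Schur--Horn finishes it, whereas you arrive at $(\Omega^\pi\iota)_j=\tfrac1n\textrm{Tr}(\pi_j A_\iota)$ by expanding the $\Omega^\pi$ matrix and telescoping with $\sum_k\pi_k=I$ --- the same computation entered from the other end. You were also right to flag the residual factor of $1/n$; it is present in the paper's argument as well and the stated proposition quietly absorbs it into the normalization of $A_\iota$.
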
 

\begin{proof}
This is a consequence of the fact that when $\rho = \frac{1}{n}I$, $\rho^\prime = \frac{1}{n}A_\iota$. If one applies the Schur-Horn theorem, the proposition immediately follows.
\end{proof}

In general, $\mathcal{V}_u(\Lambda)$ is not the convex hull of a finite number of vectors, as it is at the completely mixed state. However, it does raise a tractable question: where does SLC hold for the $\Lambda$-control system when one is restricted to a finite control-set? To this end, we state a theorem (which is easier to state in terms of $x=\Pi\Lambda\in\mathbb{R}^{n-1}$ rather than $\Lambda\in\mathbb{R}^n$) about the region $\mathcal{A}\subset \bar{\mathcal{T}}$ where the necessary condition for SLC from proposition \ref{SLC} holds. It states that $\mathcal{A}$ is the image under a rational function of an $n$-simplex of parameters, and that the boundary $\partial \mathcal{A}$ is the image of the parameter-simplex's boundary. 

\begin{theorem}\label{SLCset}
Let $\{\pi^J: J = 1, \dots, n \}$ be a finite number of complete flags such that $A_J= \Pi\Omega^{\pi^J}\Pi^T$ is invertible $\forall J$. Also define $b_J = \Pi\Omega^{\pi^J}\iota$. Define the function:
\begin{align}
B: \mathcal{T}_s & \rightarrow \mathbb{R}^{n-1}\\
B(s) = - &\left(\sum_{J=1}^ns_J A_J \right)^{-1}\left(\sum_{J=1}^n s_Jb_J\right),
\end{align}
where $s=\langle s_1,\dots,s_n\rangle\in\mathcal{T}_s := \{s: s_J \ge 0, \sum_J s_J = 1\}$. If our control-set is $\{\pi^J\}$, then $\mathcal{A}=  \textrm{int }B(\mathcal{T}_s)$. Furthermore, $\partial B(\mathcal{T}_s) = B(\partial\mathcal{T}_s)$ and $\partial\mathcal{A} = B(\partial \mathcal{T}_s)$.
\end{theorem}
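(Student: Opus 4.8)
The plan is to turn the SLC test of Proposition~\ref{SLC} into a statement about the convex hull of the finitely many velocity vectors available at a point, and then recognise $B$ as the inverse of the barycentric-coordinate map of that hull. Set $v_J(x):=b_J+A_Jx$. By the corollary containing~(\ref{xeq}), once the control set is restricted to $\{\pi^J\}$ the velocities $\tfrac{d}{dt}x$ available at $x\in\bar{\mathcal T}$ are exactly $v_1(x),\dots,v_n(x)$, so Proposition~\ref{SLC} says $x\in\mathcal A$ iff $0\in\textrm{int co}\,\{v_1(x),\dots,v_n(x)\}$. Since these are $n$ points in $\mathbb R^{n-1}$, this is equivalent to their being affinely independent together with the (then unique) weights $s\in\mathcal T_s$ solving $\sum_Js_Jv_J(x)=0$ satisfying $s_J>0$ for all $J$, i.e.\ $s\in\textrm{int}\,\mathcal T_s$.

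The technical core is to show $B$ is well defined on all of $\mathcal T_s$, not merely at the vertices. Each $\Omega^{\pi}$ has vanishing column sums and nonnegative off-diagonal entries, since $w^{\pi}_{ij}=\sum_k\textrm{Tr}\big((\pi_iL_k\pi_j)(\pi_iL_k\pi_j)^\dagger\big)\ge 0$, so it generates a continuous-time Markov chain; using $\Omega^{\pi}\iota=\Pi^Tb^{\pi}$ and $\Pi^T\Pi=I_n-n\iota\iota^T$ (the orthogonal projection onto $\iota^\perp$), one checks that $A^{\pi}=\Pi\Omega^{\pi}\Pi^T$ is invertible iff $\ker\Omega^{\pi}\cap\iota^\perp=\{0\}$ iff that chain has a single closed communicating class. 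For $s\in\mathcal T_s$ the generator $\sum_Js_J\Omega^{\pi^J}$ has transition digraph equal to the union of those of the $\Omega^{\pi^J}$ with $s_J>0$; any closed set of the union is closed in each such summand and hence contains the unique recurrent class of each, so the union has at most --- hence exactly --- one closed communicating class. Thus $\sum_Js_JA_J$ is invertible for every $s\in\mathcal T_s$, $B$ is a smooth rational map on the compact set $\mathcal T_s$, and $B(\mathcal T_s)$ is compact; moreover $\iota+\Pi^TB(s)$ is the stationary distribution of $\sum_Js_J\Omega^{\pi^J}$, so $B(\mathcal T_s)\subseteq\bar{\mathcal T}$. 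The same invertibility gives $B(\mathcal T_s)=\{x:0\in\textrm{co}\,\{v_1(x),\dots,v_n(x)\}\}$: indeed $\sum_Js_Jv_J(B(s))=\sum_Js_Jb_J+(\sum_Js_JA_J)B(s)=0$ by definition of $B$, and conversely any relation $\sum_Js_Jv_J(x)=0$ with $s\in\mathcal T_s$ rearranges, after inversion, to $x=B(s)$.

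Now I would assemble the set identities. $\mathcal A$ is open (an open condition on the continuously varying $v_J(x)$) and contained in $B(\mathcal T_s)$, so $\mathcal A\subseteq\textrm{int}\,B(\mathcal T_s)$; and the representing weight of any $x\in\mathcal A$ is unique and interior, so $\mathcal A\subseteq B(\textrm{int}\,\mathcal T_s)$. For the converse inclusion, differentiate: $\partial_{s_K}B=-(\sum_Js_JA_J)^{-1}v_K(B(s))$, so on the tangent space $\{\delta s:\sum_K\delta s_K=0\}$ the Jacobian $DB(s)$ has full rank $n-1$ exactly when $v_1(B(s)),\dots,v_n(B(s))$ are affinely independent, which is equivalent to $B(s)\in\mathcal A$. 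Granting full rank on $\textrm{int}\,\mathcal T_s$, $B|_{\textrm{int}\,\mathcal T_s}$ is an injective ($B(s)=B(s')$ forces the affine dependence $\sum_J(s_J-s'_J)v_J=0$) local diffeomorphism, hence open, so $B(\textrm{int}\,\mathcal T_s)=\mathcal A$ is open and equals $\textrm{int}\,B(\mathcal T_s)$. Finally $\overline{\mathcal A}=B(\mathcal T_s)$ by continuity and compactness, whence $\partial\mathcal A=B(\mathcal T_s)\setminus\mathcal A=\partial B(\mathcal T_s)$; this common set is $B(\partial\mathcal T_s)$, because $B(s)$ with $s\in\partial\mathcal T_s$ cannot lie in $\mathcal A$ (its representing weight would be the interior point $s$), while no $x\in B(\mathcal T_s)\setminus\mathcal A$ can be $B$ of an interior point.

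The step I expect to be the real obstacle is the claim $B(\textrm{int}\,\mathcal T_s)\subseteq\mathcal A$, i.e.\ that the velocity vectors $v_J(B(s))=\Pi\Omega^{\pi^J}(\iota+\Pi^TB(s))$ stay affinely independent throughout the open parameter simplex (equivalently, that $DB$ keeps full rank there); a failure would mean the stationary distribution of $\sum_Js_J\Omega^{\pi^J}$ is annihilated modulo $\iota$ by a fixed nonzero $\sum_Jc_J\Omega^{\pi^J}$ with $\sum_Jc_J=0$. I would try to exclude this once more from the Markov-generator structure of the $\Omega^{\pi^J}$, and, failing a clean argument, from the semialgebraic nature of the degenerate locus --- its $B$-image has empty interior, so it enlarges $\mathcal A$ at most within $\partial\mathcal A$ and leaves $\textrm{int}\,B(\mathcal T_s)$ unchanged --- possibly at the price of a mild genericity hypothesis on $\{\pi^J\}$. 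The other ingredients --- the reduction through Proposition~\ref{SLC}, the global invertibility of $\sum_Js_JA_J$, and the point-set topology of the image of a simplex under a smooth proper map --- are routine once set up as above.
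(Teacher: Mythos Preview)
Your setup and overall logic closely parallel the paper's: reduce Proposition~\ref{SLC} to the barycentric condition $\sum_J s_J v_J(x)=0$, identify $B$ as its inverse, and then sort out the topology of the image. Two points of genuine difference are worth flagging.

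\textbf{Invertibility of $\sum_J s_J A_J$.} You argue via the Markov-generator structure of $\Omega^\pi$ (unique closed communicating class, preserved under unions of transition digraphs). The paper instead observes that each $\Omega^{\pi}$ is negative semi-definite (the footnote computation $v^T\Omega v\le 0$), so each $A_J$ is negative semi-definite and, being invertible by hypothesis, negative definite; a convex combination then remains negative definite. Your route is arguably the more careful one, since $\Omega^\pi$ is not symmetric and the implication ``invertible and negative semi-definite $\Rightarrow$ negative definite'' is only automatic for symmetric matrices; but for the rest of the argument either version suffices.

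\textbf{The obstacle you isolate.} You aim for $B(\textrm{int}\,\mathcal T_s)\subseteq\mathcal A$, i.e.\ the exclusion of affine dependence among the $v_J(B(s))$ for interior $s$, and you propose to attack this via Markov structure or semialgebraic thinness. The paper takes a different route and does \emph{not} exclude interior critical points. Its key observation is that the linearity of $A(\cdot)$ and $b(\cdot)$ in $s$ forces $B$ to be constant along any critical direction: if $dB_{s^*}(\delta s^*)=0$ then $b(\delta s^*)+A(\delta s^*)B(s^*)=0$, whence $b(s^*+k\delta s^*)+A(s^*+k\delta s^*)B(s^*)=0$ for every real $k$, so $B(s^*+k\delta s^*)=B(s^*)$. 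Critical points therefore foliate into disjoint affine subsimplices $\mathcal T_*$, each collapsing under $B$ to a single point $x_*$. The paper then argues that $x_*\in\partial B(\mathcal T_s)$ by using that $-A(s)^{-1}$ is positive definite to confine $\textrm{im}\,dB_s$, for all $s\in\mathcal T_*$, to one side of a fixed hyperplane through $x_*$; hence no neighbourhood of $\mathcal T_*$ can map onto a ball about $x_*$. Since each critical line meets $\partial\mathcal T_s$, one also gets $x_*\in B(\partial\mathcal T_s)$, and the identity $\partial B(\mathcal T_s)=B(\partial\mathcal T_s)$ follows; from there $\mathcal A=\textrm{int}\,B(\mathcal T_s)$ drops out. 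Note that with interior critical points present, your intermediate target $B(\textrm{int}\,\mathcal T_s)=\mathcal A$ can fail as stated --- the paper's reformulation via the boundary identity sidesteps this.
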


\begin{proof}
 The necessary condition for SLC is $0 \in \textrm{int co }\{b_J+A_Jx: J=1,\dots,n\}$. Either the $n$ points $b_J+A_Jx$ lie in a hyperplane, in which case the interior is empty, or they form an $n$-simplex. In the latter case, convexity means  the condition reduces to $0 = \sum_{J=1}^n s_J(b_J+A_Jx)$, $s\in\textrm{int }\mathcal{T}_s$. Re-writing we get:
\begin{align}
\sum_{J=1}^n s_JA_Jx  &= - \sum_{J=1}^n s_Jb_J \\
x &= - \left(\sum_{J=1}^n s_JA_J  \right)^{-1} \left(\sum_{J=1}^n s_Jb_J \right) = B(s). 
\end{align}
We can take the inverse because each $A_J$ is invertible, and since $\Omega$ is always negative semi-definite\footnote{$v^T\Omega v = \sum_{i<j} (w_{ij}+w_{ji})v_iv_j-w_{ij}v_i^2 - w_{ji}v_j^2 \le 0$ if $w_{ij}$, $w_{ji} \le 0$. }, each $A_J$ is also negative semi-definite.

In the exceptional case where the points are co-planar, we apply Carath\'{e}odory's Theorem \cite{Cara1911}, which says that any point in a convex hull of a set $P$ in an $m$-dimensional linear space must also lie in the convex hull of a set $P^\prime \subseteq P$ with at most $m$ elements. This means if the $b_J+A_Jx$ are co-planar, there is one we can eliminate without changing the convex hull. But this means one element of $s$ is zero, and this only occurs on $\partial\mathcal{T}_s$.
So the exceptional case only occurs if $x\in B(\partial\mathcal{T}_s)$. We will shortly show that $\partial B(\mathcal{T}_s) = B(\partial\mathcal{T}_s)$.  Therefore we must have $\mathcal{A} = \textrm{int }B(\mathcal{T}_s)$.

Next we show that $\partial B(\mathcal{T}_s) = B(\partial\mathcal{T}_s)$. There are three types of points on $\mathcal{T}_s$: boundary points, interior points that are critical points of $B$ and interior points that are regular points of $B$. Regular points must map to points in $\textrm{int }B(\mathcal{T}_s)$, due to the Inverse Function Theorem. To examine the interior critical points, write $A(s) = \sum_{J=1}^n s_JA_J$ and $b(s) = \sum_{J=1}^n s_Jb_J$. Then the directional derivative of $B$ is:
\begin{align}
dB_s(\delta s) &= - A(s) ^{-1} b(\delta s) - A(s) ^{-1} A(\delta s) A(s) ^{-1} b(s)\\ 
&= - A(s) ^{-1} \left( b(\delta s) + A(\delta s)x(s) \right),
\end{align}
where $\delta s$ is an arbitrary vector in $\mathcal{T}_s$. We have used the product rule as well as the derivative formula for matrix inverse: $A^{-1\prime} =- A^{-1}A' A^{-1}$. We claim there are no isolated critical points, and that the critical points form disjoint subsimplices of $\mathcal{T}_s$. If the derivative is degenerate at some $s^*$, there is some non-zero $\delta s ^*$ for which $dB_{s^*}(\delta s^*)=0$. Since $A(s)^{-1}$ is full-rank, this means $b(\delta s^*) + A(\delta s^*)x(s^*) = 0$. Linearity of $b$ and $A$ in $s$ means that $b(s^*+k\delta s^*) + A(s^*+k\delta s^*)x(s^*) = 0$ for all real $k$. But this implies that $x(s^*+k \delta s^*) = x(s^*)$ for all real $k$. It follows that $s^*$ lies in some affine subspace $V_* = s^* + \textrm{ker }dB_{s^*}$ and that every point in $V_*$ is a critical point. There may be more than one critical subspace, but they must be disjoint: a non-zero intersection could be used to generate a higher-dimensional critical subspace that contained the intersecting subspaces. Now if we restrict a critical subspace to $\mathcal{T}_s$, we are left with a subsimplex $\mathcal{T}_*$. We have seen that any critical subsimplex maps to a single point under $B$. We claim that this point lies on the boundary of $B(\mathcal{T}_s)$.

To see why $B(\mathcal{T}_*)\in \partial B(\mathcal{T}_s)$, we show that $\bigcup_{s\in \mathcal{T}_*} \textrm{im }dB_s \ne \mathbb{R}^{n-1}$ which means that there are directions from $B(\mathcal{T}_*)$ that can't be generated by small deviations from $\mathcal{T}_*$. Therefore a neighborhood of $\mathcal{T}_*$ cannot map to a ball in $\mathbb{R}^{n-1}$, which it must if $\mathcal{T}_*$ mapped to the interior of $B(\mathcal{T}_s)$. To determine which direction, let $V_\perp$ be the complementary subspace to $V_*$, so that $\textrm{im }dB_s = \bigcup_{v\in V_\perp} dB_s(v)$. From the formula for $dB_s$ we get that $\textrm{im }dB_s = -A(s)^{-1}(b(V_\perp+A(V_\perp)x_*)$ where $x_*=x(\mathcal{T}_*)$. Since $V_\perp$ has dimension $m<n-1$, $b(V_\perp)+A(V_\perp)x_*$ is an $m$-dimensional linear subspace of $\mathbb{R}^{n-1}$, and there is a  vector $v_{**}$ orthogonal to it. This is the direction we are looking for, because $-A(s)^{-1}$ is a positive-definite matrix, which can never map a vector in a linear subspace to the complement of that subspace (open half-spaces are invariant under positive-definite linear maps). It follows that a sufficiently small neighborhood of $\mathcal{T}_*$ maps to a set that only intersects $\textrm{span }v_{**}$ at $x_*$. Therefore $x_*\not\in \textrm{int }B(\mathcal{T}_s)$.

What we really want to show is that the boundary points of $\mathcal{T}_s$ map to $\partial B(\mathcal{T}_s)$. If $s\in\partial\mathcal{T}_s$ and $s\in V_*$, then we know that $B(s)\in\partial B(\mathcal{T}_s)$, so let us consider a boundary point $s$ that is regular. $s$ cannot map locally to an interior point, so if it maps to an interior point, some other $s'$ must also map there $\emph{i.e.} B(s) = B(s')$. Note however that the structure of $B$ demands that $B(s) = B(s+k(s'-s))$ for any real $k$. This means that $s$ is part of an affine space that maps to $B$. This affine space must be one of the critical subspaces, and so $s$ must map to a boundary point.

Finally, since $\mathcal{A} = \textrm{int } B(\mathcal{T}_s)$, we have $\partial \mathcal{A} = \partial B(\mathcal{T}_s) = B(\partial \mathcal{T}_s)$. To find the boundary of the SLC set, we need only map the boundary points of the simplex $\mathcal{T}_s$.
\end{proof}

The theorem applies only for a control set of $n$ flags, but it can be extended to a larget set:

\begin{corollary}\label{IV_col}
If one uses $n_P>n$ flags as controls, $\mathcal{A} = \bigcup_K B_K(\textrm{int }\mathcal{T}_s)$, where $K$ is a subset of $\{1,\cdots,n_P\}$ with $n$ elements, and $B_K$ is the associated $B$ using $\{b_J, A_J: J\in K\}$. Furthermore, $\partial\mathcal{A}\subseteq \bigcup_K B_K(\partial\mathcal{T}_s)$.
\end{corollary}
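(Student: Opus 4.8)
The plan is to reduce the $n_P$-flag problem to finitely many $n$-flag problems, apply Theorem~\ref{SLCset} to each, and then patch the pieces together. As in the proof of that theorem, the necessary SLC condition of Proposition~\ref{SLC}, written on $\bar{\mathcal{T}}$, is $0\in\textrm{int co}\{b_J+A_Jx:J=1,\dots,n_P\}$, so $\mathcal{A}$ is precisely the set of $x$ for which this holds. For each $n$-element subset $K\subseteq\{1,\dots,n_P\}$, write $\mathcal{A}_K$ for the SLC region of the subsystem whose control set is only $\{\pi^J:J\in K\}$. Theorem~\ref{SLCset} (applicable since every $A_J$ is invertible) identifies $\mathcal{A}_K$ with $B_K(\textrm{int }\mathcal{T}_s)$, up to the images under $B_K$ of its interior critical points, which that theorem places in $\partial B_K(\mathcal{T}_s)$. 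Hence it suffices to establish $\mathcal{A}=\bigcup_K\mathcal{A}_K$, the union being over all $n$-element $K$.

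The inclusion $\bigcup_K\mathcal{A}_K\subseteq\mathcal{A}$ is immediate from monotonicity of the convex hull: enlarging the flag set enlarges $\textrm{co}\{b_J+A_Jx\}$ and hence its interior, so $0\in\textrm{int co}\{b_J+A_Jx:J\in K\}$ already forces $0\in\textrm{int co}\{b_J+A_Jx:J=1,\dots,n_P\}$. For the reverse inclusion I would fix $x\in\mathcal{A}$ and run a Carath\'{e}odory/triangulation argument just as in Theorem~\ref{SLCset}: since $0$ is interior to the polytope $\textrm{co}\{b_J+A_Jx\}$, that polytope is full $(n-1)$-dimensional, so triangulating it using only the $b_J+A_Jx$ as vertices puts $0$ into some closed $(n-1)$-simplex $\Delta_K=\textrm{co}\{b_J+A_Jx:J\in K\}$ with affinely independent vertices indexed by an $n$-set $K$. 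When $0$ lies in the interior of $\Delta_K$, the SLC condition holds for the $K$-subsystem, i.e.\ $x\in\mathcal{A}_K$, and the barycentric-coordinate identity $\sum_{J\in K}s_J(b_J+A_Jx)=0$ with $s\in\textrm{int }\mathcal{T}_s$ is exactly $x=B_K(s)$, matching the form asserted in the corollary.

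The boundary statement then follows from the fact that the boundary of a finite union is contained in the union of the boundaries, combined with the identity $\partial\mathcal{A}_K=\partial B_K(\mathcal{T}_s)=B_K(\partial\mathcal{T}_s)$ supplied by Theorem~\ref{SLCset}:
\begin{equation}
\partial\mathcal{A}=\partial\Big(\bigcup_K\mathcal{A}_K\Big)\subseteq\bigcup_K\partial\mathcal{A}_K=\bigcup_K B_K(\partial\mathcal{T}_s). \nonumber
\end{equation}

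The main obstacle is the reverse inclusion, and within it the degenerate case in which the points $b_J+A_Jx$ are co-planar enough that $0$ is forced onto a face shared by two simplices of every triangulation rather than into the interior of a single one; then no single $n$-subset $K$ carries $0$ in the interior of $\textrm{co}\{b_J+A_Jx:J\in K\}$. As in Theorem~\ref{SLCset} one would invoke Carath\'{e}odory's theorem to drop one point, which shows such an $x$ lies in $B_K(\partial\mathcal{T}_s)$ for a suitable $K$; this is exactly why the boundary assertion is stated only as an inclusion, and why at the level of the region $\mathcal{A}$ the identification with $\bigcup_K B_K(\textrm{int }\mathcal{T}_s)$ should be understood up to such boundary points. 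Pinning down this dichotomy --- that degenerate configurations contribute only to $\partial\mathcal{A}$ --- is the one step that requires genuine care rather than bookkeeping, and in the worst cases may call for a mild genericity hypothesis on the flags $\{\pi^J\}$.
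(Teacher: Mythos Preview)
Your approach matches the paper's: both reduce the $n_P$-flag problem to the $n$-flag subproblems via Carath\'{e}odory and then invoke Theorem~\ref{SLCset} on each $K$, deducing the boundary inclusion from $\partial(\bigcup_K\mathcal{A}_K)\subseteq\bigcup_K\partial\mathcal{A}_K$. The paper's proof is essentially a two-line sketch of exactly this.

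Where you differ is that you are more careful than the paper, and rightly so. The paper asserts that if $0\in\textrm{int co}\{b_J+A_Jx:J=1,\dots,n_P\}$ then Carath\'{e}odory yields an $n$-subset $K$ with $0\in\textrm{int co}\{b_J+A_Jx:J\in K\}$. Carath\'{e}odory does not say this; the correct result for interior points is Steinitz's theorem, which only guarantees a subset of size $2(n-1)$, and the bound is sharp. Your square-type worry is a genuine counterexample: with $n=3$ and four points at $(\pm1,\pm1)$, the origin is interior to the square but lies on an edge of every triangle formed from three vertices, so no $3$-subset captures it in its interior. For such an $x$ one has $x\in\mathcal{A}$ yet $x\notin\bigcup_K B_K(\textrm{int }\mathcal{T}_s)$, so the equality in the corollary fails as literally stated. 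Your diagnosis---that these degenerate configurations contribute only boundary-type points and can be handled either by reading the identity modulo a nowhere-dense set or by a genericity assumption on the flags---is the honest repair; the paper simply glosses over this.
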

\begin{proof}
If $0\in\textrm{int co }\{b_J+A_Jx: J=1,\cdots, n_P\}$, then Carath\'{e}odory's Theorem says that there is a subset $K$ of $n$ indices such that $0\in\textrm{int co }\{b_J+A_Jx: J\in K\}$. We can use the theorem to construct $\mathcal{A}_K$ for each $K$, and Carath\'{e}odory implies that $\mathcal{A}=\bigcup_K \mathcal{A}_K$.  It also follows that $\partial\mathcal{A}\subseteq \bigcup_K B_K(\partial\mathcal{T}_s)$, but equality will typically not hold (the boundary of a union is not necessarily the union of boundaries).
\end{proof}

The preceding theorem can be used to visualize SLC sets for $n=3$ and $4$. We show some examples of this in the following section.

\section{Examples}

The requirement that the $A_J$'s be invertible is not terribly restrictive, as it only requires a certain number of $w^J_{ij}$ be non-zero. For $n=3$, we have:
\begin{align}
\textrm{det} A_J = w^J_{12}w^J_{23}+w^J_{13}w^J_{32}+w^J_{12}w^J_{13}
+w^J_{21}w^J_{13}\nonumber \\+w^J_{23}w^J_{31}+w^J_{21}w^J_{23}+w^J_{31}w^J_{12}
+w^J_{32}w^J_{21}+w^J_{31}w^J_{32}.
\end{align}
Since the $w^J_{ij}$'s are always non-negative, we only need one of nine pairs to be non-zero. 

Theorem \ref{SLCset} states that for any triple of flags, the SLC is the image of $\mathcal{T}_s$ under $B(s)$, which for $n=3$ is a quotient of two homogeneous quadratic functions. Since the boundary of $\mathcal{T}_s$ consists of three line segments, the boundary $\partial\mathcal{A}=B(\partial\mathcal{T}_s)$ consists of three arcs. Now, if we have more than three flags, say $n_c$, the SLC region is the union of the SLC sets for each triple. It follows that there are $n_c \choose  2$ arcs that may contribute to $\partial\mathcal{A}$. If one plots these candidate arcs, we can visualize the SLC region.

For our examples, let $\pi_\iota$ be some complete flag formed out the eigenbasis of the Hermitian operator $A_\iota$. Define $\pi^1,\cdots, \pi^6$ to be the flags obtained by permuting the elements of $\pi_\iota$, so that we have a control-set of six flags. Call this set $\mathbb{F}_\iota$. If $A_\iota$ is simple, it is unique up to re-numbering. This choice of control-set is attractive because all possible tangent vectors at the completely mixed state are contained in the convex hull generated by $\mathbb{F}_\iota$. We have $n_c=6$, and therefore there are fifteen candidate arcs.

Figure \ref{fig1} shows an example for a random Lindblad system. By random, we mean eight Lindblad operators were generated with elements whose real and imaginary parts were uniform on the interval $[0,100]$. The top panel shows the fifteen arcs generated by $\pi_\iota$. The SLC set is the interior of the region formed by these arcs, and this is the dark region shown in the  bottom panel. To get some sense of how ``good" our SLC region is we generated five random unitary matrices, used them to generate five flags as well as their permutations. With these random flags, we used corollary (\ref{IV_col}) to plot a ``better" SLC set. This makes for ${6+6\cdot 5 \choose 2} = 630$ arcs. In the bottom panel of figure \ref{fig1}, we have shown the SLC region for this extended control set as the light region. It is clearly larger, but the original controls cover a good portion. 

Instead of examining random Lindblad systems, we can investigate systems with two specific types of Lindblad operators: jump operators and de-phasing operators. A jump operator relative to a certain orthonormal basis is a Lindblad operator with only one non-zero element, which is off-diagonal. Fix a basis and define, for $j\ne k$,  $L^J_{jk} := \sqrt{\gamma_{jk}}e_{jk}$, where $e_{jk}$ is the matrix with a one at the $(j,k)$ position and zeros elsewhere. Such an operator is called a jump operator as it models a stochastic jump from state $k$ to state $j$. A de-phasing operator meanwhile is a Lindblad operator with only diagonal non-zero elements. It is so-called as any coherent superposition of states will decay to an incoherent mixture so long as the respective diagonal elements are non-zero. In the same basis, define $L^D_l = \sum_{j=1}^n c_{l,j} e_{jj}$, where $l$ indexes the de-phasing operators. Note that with these Lindblad operators, $A_\iota = \sum_{j,k = 1, k\ne j}^3 \gamma_{jk}e_{jj}$. Hence the flag $\pi_\iota$ is in fact generated by the projectors $e_{jj}$.

\begin{figure}
\includegraphics[width=\columnwidth, width=\columnwidth, trim = 0cm -3cm 0cm 0cm]{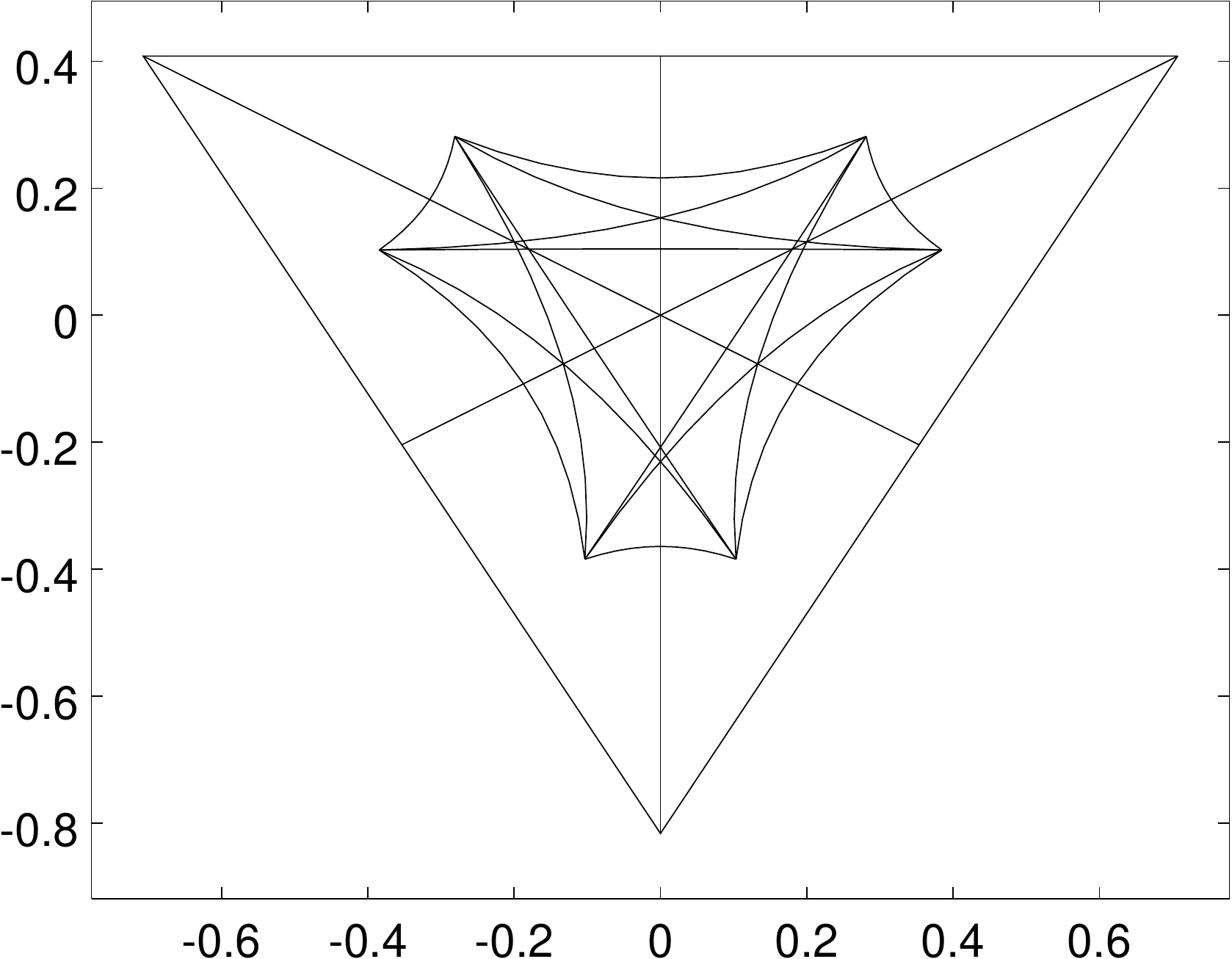}
\includegraphics[width=\columnwidth]{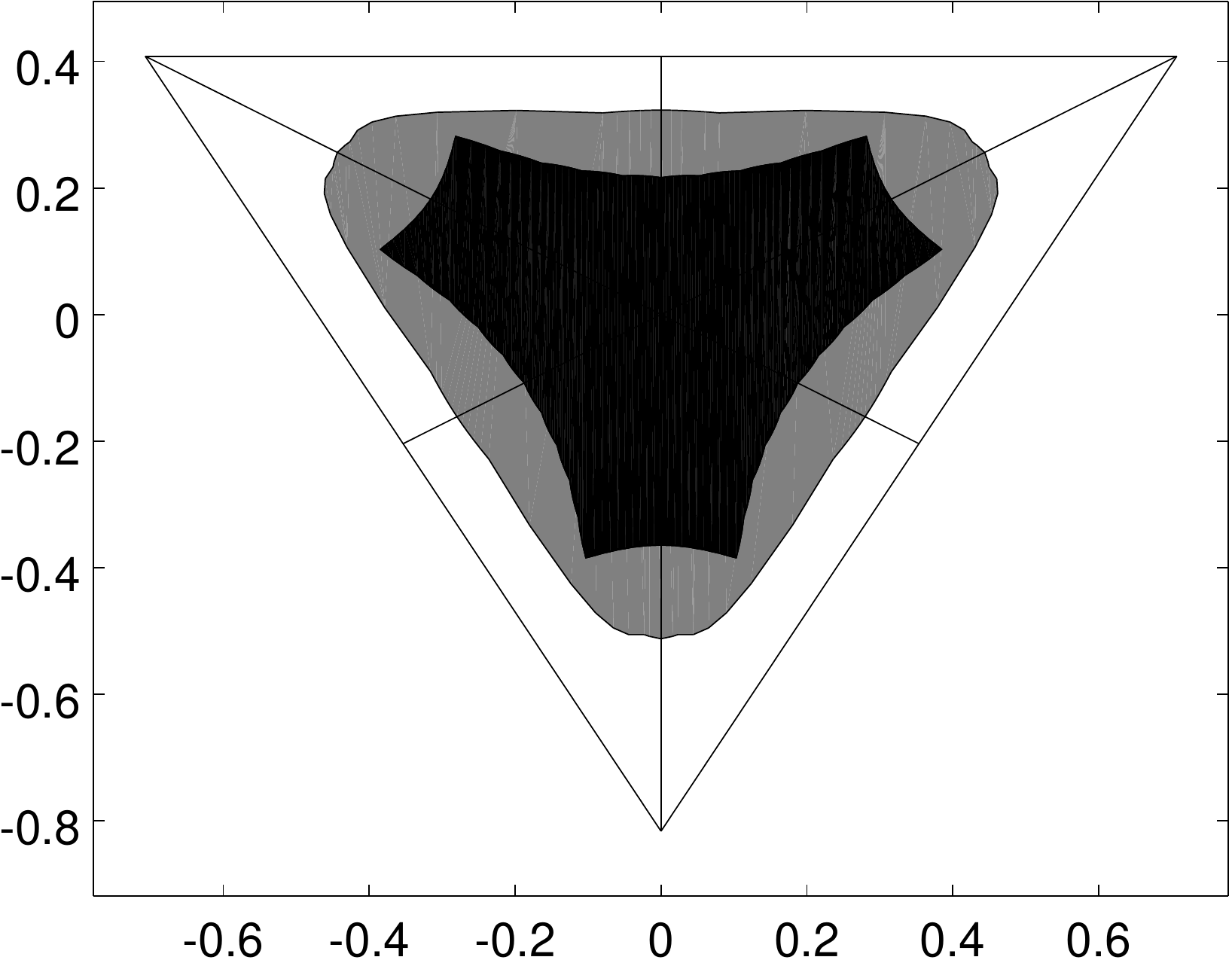}
\caption{(Top) Candidate arcs for $\partial \mathcal{A}$ for a random Lindblad system (Bottom) $\mathcal{A}$ for the same system when (dark) only $\pi_\iota$ are used and (light) five random flags extend the control set. Axes co-ordinates refer to the components of $x$. }
\label{fig1}
\end{figure}

Figure \ref{fig2} shows $\mathcal{A}$ for a system with six jump operators (the coefficients are $\sqrt{\gamma_{12}} = 81$, $\sqrt{\gamma_{77}} = 81$, $\sqrt{\gamma_{23}} = 73$, $\sqrt{\gamma_{32}} = 36$, $\sqrt{\gamma_{31}} = 70$ and $\sqrt{\gamma_{13}} = 48$). The SLC region obtained using $\pi_\iota$, in dark, covers almost the entire SLC region with an extended control set (similar to the preceding example, where there 630 controls in total). This is not an accident. When restricted to jump and de-phasing operators in some basis, it is difficult to find flags other than $\pi_\iota$ and its permutations that enlarge $\mathcal{A}$. The reason for this is that these flags are critical points of the map $w(\pi)$, and in fact the derivative of this map vanishes when $\pi\in\mathbb{F}_\iota$.

To see why, consider that the derivative (\ref{w-deriv}) vanishes if, for each Lindblad operator $L_m$ and component $dw_{jk}$, either $\pi_j L_m \pi_k = 0$, or $\pi_j [L_m, h]\pi_k$ $\forall h\in T_\pi\mathbb{F}$. For the de-phasing operators, the first condition is automatically satisfied, since they are diagonal with respect to the flag $\pi_\iota$. For a jump operator $L^J_{j'k'}$, we have $\pi_j L^J_{j'k'} \pi_k = \delta_{jj'}\delta_{kk'}\sqrt{\gamma_{j'k'}}$, so the first condition is satisfied for all components except for $j=j'$, $k=k'$. And for this component, we claim the second condition is satisfied. 

To see why this claim is true, note that $T_\pi\mathbb{F}$ is the subspace of $\mathfrak{su}(n)$ consisting of all off-diagonal matrices (since any projector set is stationary when acted upon by diagonal matrices). For this reason, $\pi_k h \pi_k \ne 0$, which means $L^J_{jk}h\pi_k =0$. Similarly, $\pi_j h L_{jk}^J = 0$, and therefore $\pi_j [L_{jk}^J, h]\pi_k = 0$. So we can say that $dw(\pi_\iota) = 0$. It follows that the map $\pi\rightarrow \Omega(w(\pi))\Lambda$  has a critical point when $\pi=\pi_\iota$, since $\Omega()$ is linear in $w$.

\begin{figure}
\includegraphics[width=\columnwidth, trim = 0cm -3cm 0cm 0cm]{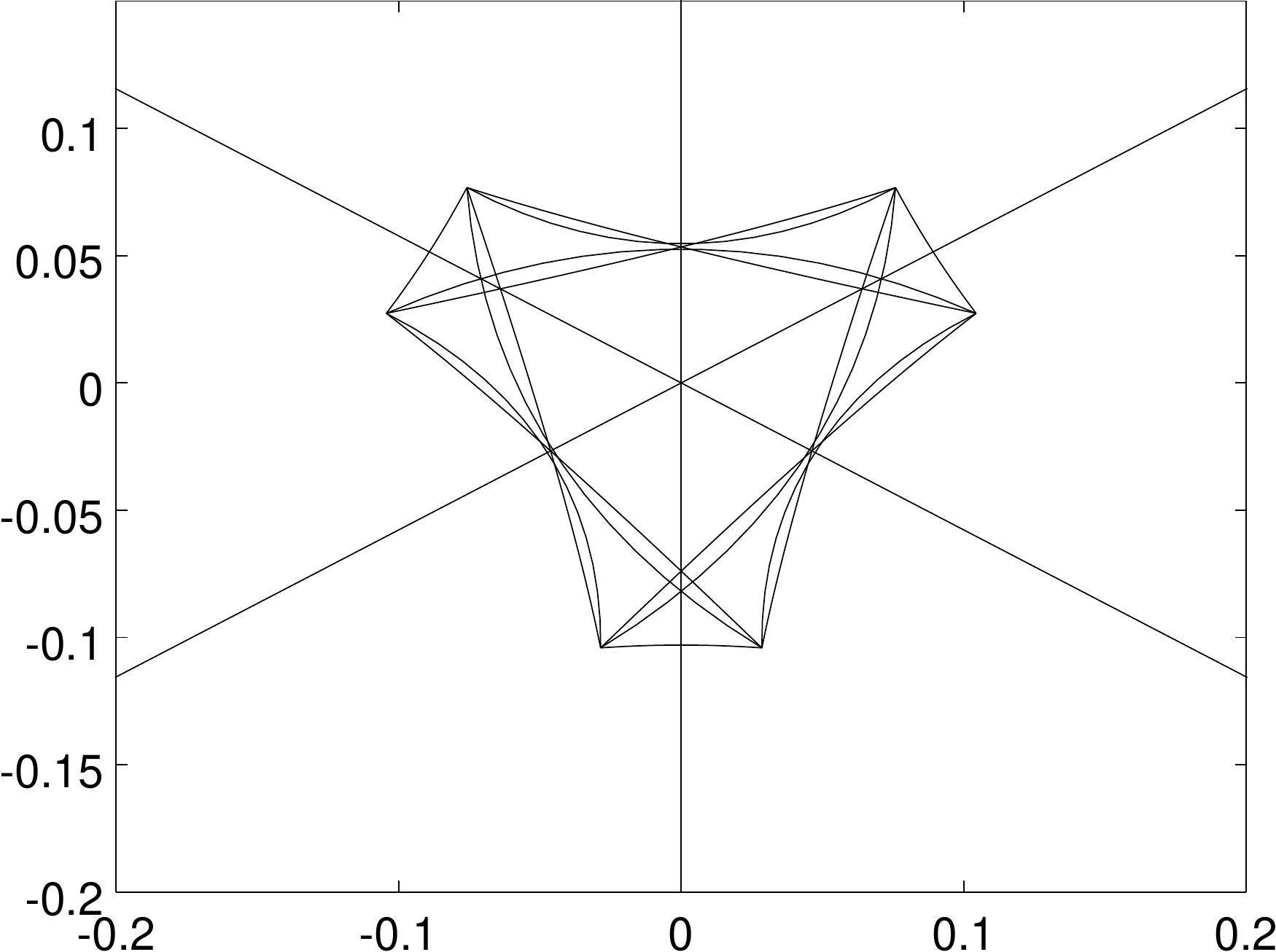}
\includegraphics[width=\columnwidth]{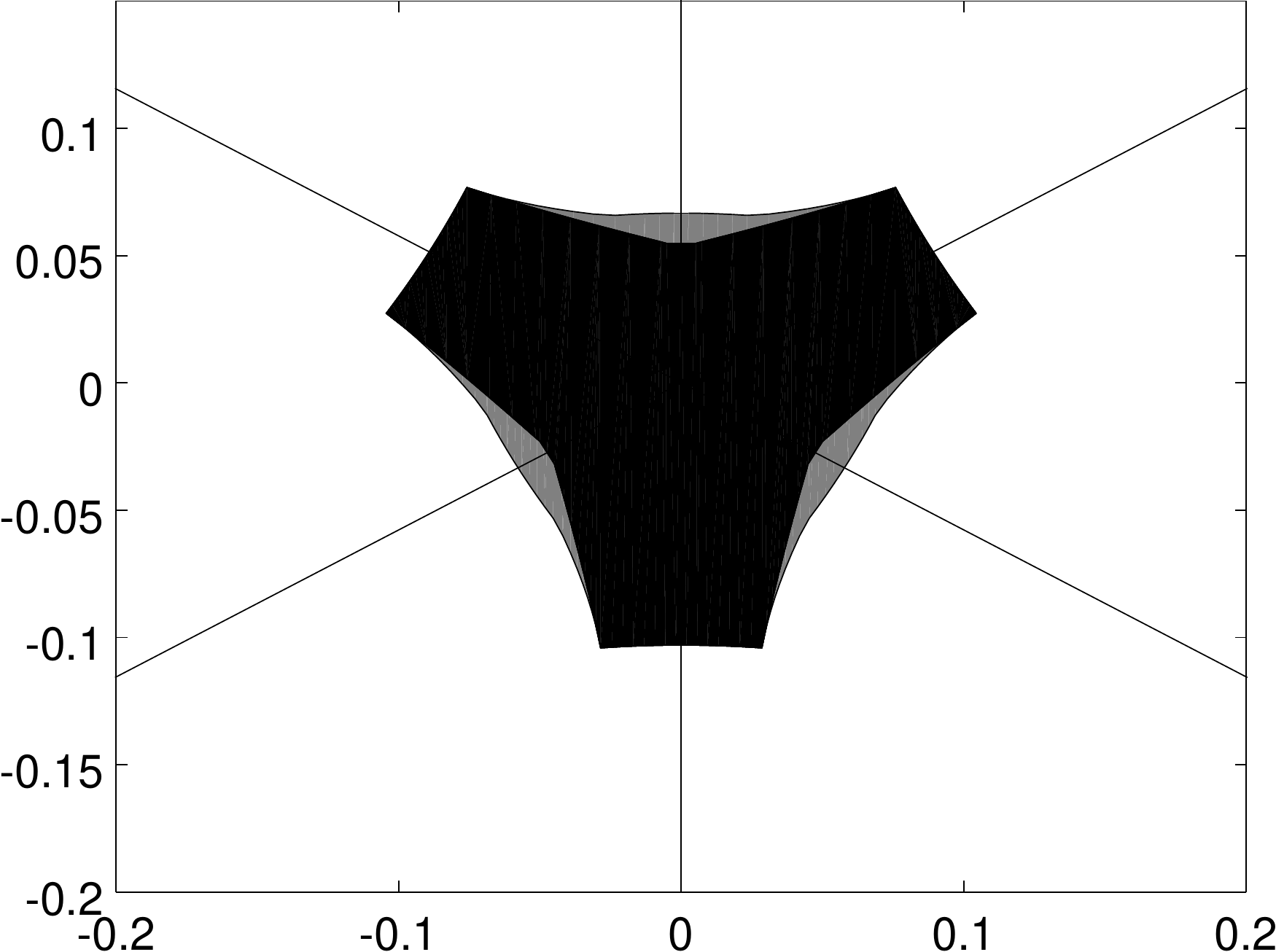}
\caption{(Top) Candidate arcs for $\partial \mathcal{A}$ for a Lindblad system with only jump and de-phasing operators (Bottom) $\mathcal{A}$ for the same system when (dark) only $\pi_\iota$ are used and (light) five random flags extend the control set.}
\label{fig2}
\end{figure}

The significance of $\pi_\iota$ being a critical point is that proposition \ref{SLC} implies that SLC fails when $0$ moves from an interior point of $\mathcal{V}_u(\Lambda)$ to to a boundary point. But a boundary point of $\mathcal{V}_u(\Lambda)$ must be a critical value of $\pi\rightarrow \Omega(w(\pi))\Lambda$, or alternatively a critical value of the map $\pi\rightarrow \Pi\Omega(w(\pi))\Lambda$. Setting $b(\sigma .\pi_\iota) - A(\sigma .\pi_\iota)x=0$ yields the six terminal points of the fifteen arcs from which $\partial \mathcal{A}$ is obtained. Note that in principle, the non-terminal points of the arcs are \emph{not} critical points, but in practice, there is not much room between the arcs and any points that fall outside.

We can also visualize $\mathcal{A}$ for $n=4$. Figures \ref{fig3} and \ref{fig4} show $\partial A$ for two randomly generated systems consisting of only jump operators. Figure \ref{fig3} shows a system with four Lindblad operators: $\sqrt{5}e_{12}$, $\sqrt{3}e_{21}$, $\sqrt{4}e_{23}$ and $\sqrt{3}e_{34}$. For a four-dimensional system, $\mathcal{T}$ consists of twenty-four sub-simplices corresponding to the different eigenvalue orderings. Straight line-segments in the figures are used to indicate the boundaries between the sub-simplices. In figure \ref{fig3}, we see that $\partial\mathcal{A}$ shares a portion of $\partial\mathcal{T}$, but does not include the vertices. The vertices correspond to the orbit of pure states, so it is not possible to purify this system with the flag $\pi_\iota$. However, the edges correspond to states where the two lower eigenvalues are zero, so it is possible to obtain states that are a mixture of only two pure states.

Figure \ref{fig4} has eight Lindblad operators: $\sqrt{4}e_{12}$, $\sqrt{8}e_{13}$, $\sqrt{6}e_{14}$, $\sqrt{13}e_{23}$, $\sqrt{8}e_{32}$, $\sqrt{17}e_{34}$, $\sqrt{4}e_{42}$ and $\sqrt{5}e_{43}$. These have been chosen so that $\partial\mathcal{A}$ includes the orbit of pure states. Interestingly the vertices are the only points on $\partial\mathcal{A}$ that are contained in $\partial\mathcal{T}$. So while it is possible to purify this system with the flag $\pi_\iota$, it is not possible to obtain arbitrary mixtures of two pure states, or even other mixtures of three pure states. .

\begin{figure}
\includegraphics[width=\columnwidth, trim = 0cm -3cm 0cm 0cm]{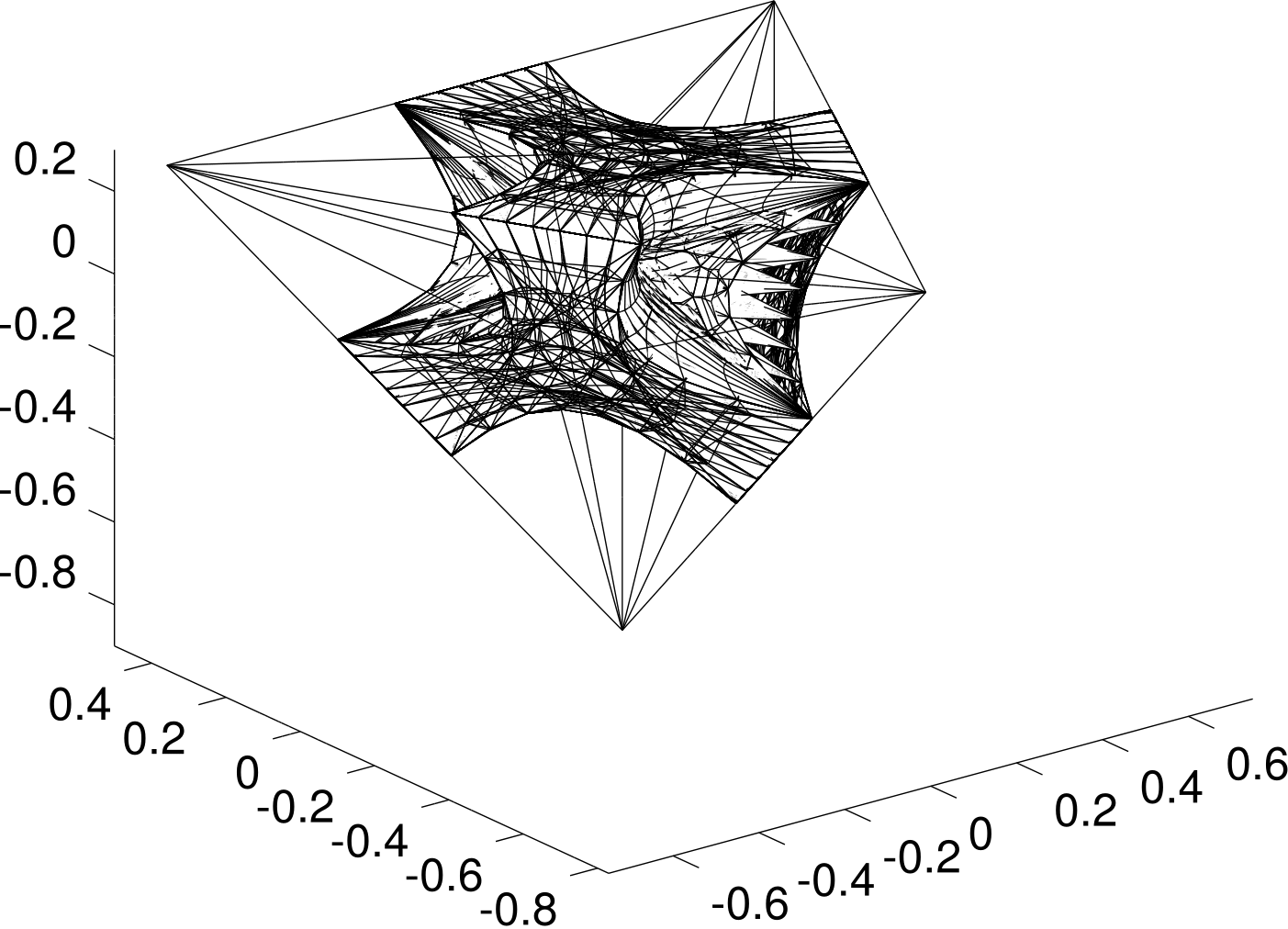}
\caption{$\partial A$ for an $n=4$ Lindblad system that cannot be purified, but for which mixtures of two pure-states are reachable. Axes co-ordinates refer to components of $x$. }
\label{fig3}
\end{figure}

\begin{figure}
\includegraphics[width=\columnwidth, trim = 0cm -3cm 0cm 0cm]{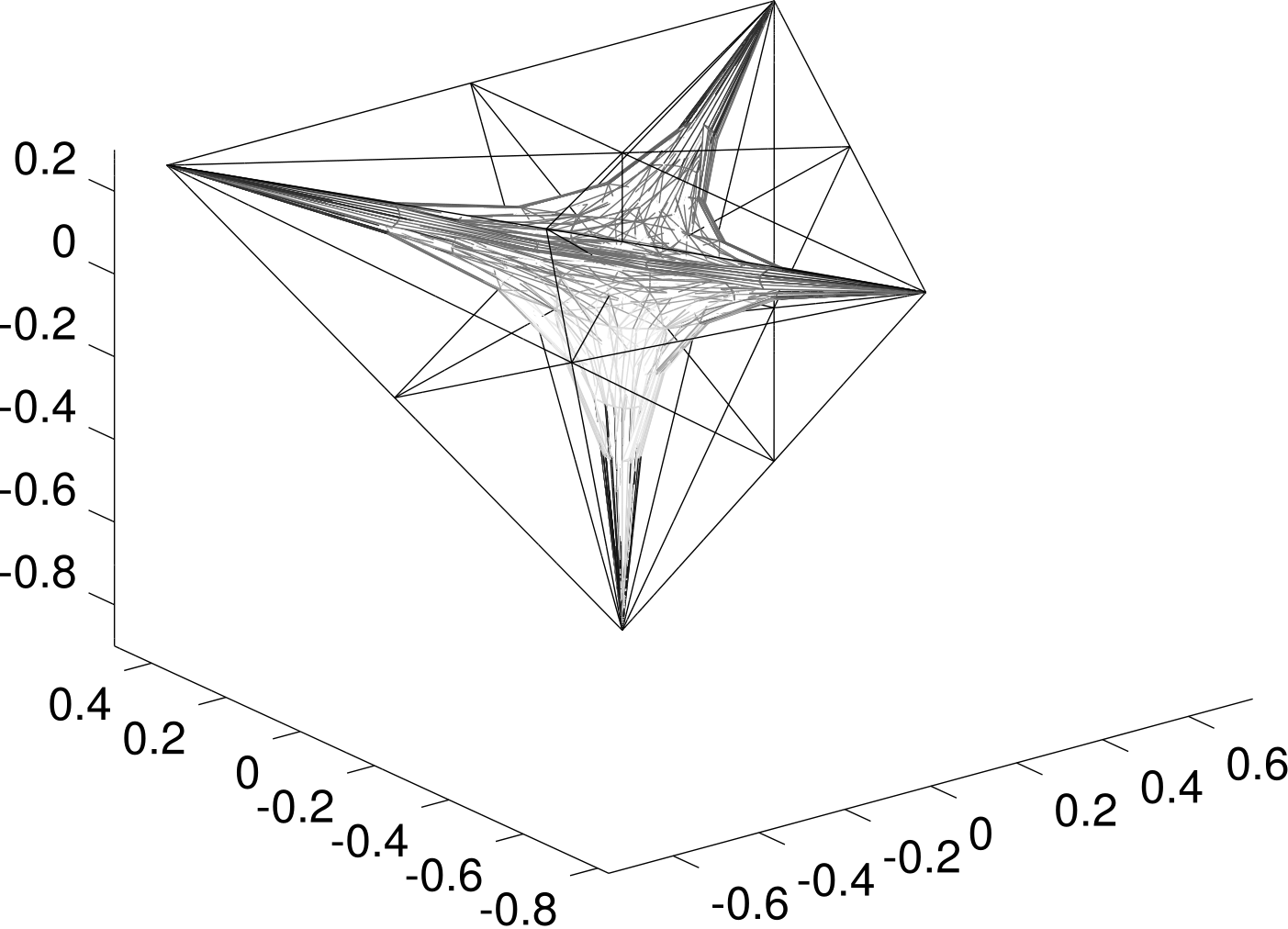}
\caption{$\partial A$ for an $n=4$ Lindblad system for which the only globally reachable mixtures of three pure states are the pure states. }
\label{fig4}
\end{figure}

\section{Conclusions and Future Work}

This paper has demonstrated a procedure by which the dynamics of a quantum Lindblad system can be decomposed into its inter- and intra-orbit dynamics. The purpose of this is to investigate how the system moves between orbits depending on how the system moves along the orbit. Since we can construct arbitrary paths along the orbit given sufficiently fast Hamiltonian control, we would like to know which orbits are reachable, and how to construct the necessary Hamiltonians. We have shown that the orbits can be represented by a state vector $\Lambda$ (technically an equivalence class of such vectors), and the position within the orbit can be represented by a control-flag $\pi$, which is an $n$-tuple of orthonormal projectors. Given this decomposition, we have written down a dynamical equation (\ref{LambODE}) and a control system (\ref{LCS}).We have shown how to reconstruct a Hamiltonian from a desired trajectory along the orbit manifold. Because the orbits are lower-dimensional manifolds at eigenvalue crossings, planning trajectories through crossings require projectors obeying a technical condition.

If one is only studying local controllability, the technicalities concerning eigenvalue crossings can be safely ignored. The challenge in studying local controllability is the fact the control set is not a linear space, but a compact manifold. We have shown that if one limits the control set to a finite subset, the region of strong local controllability can be calculated analytically. We have shown several examples for $n=3$ and $n=4$. While a dramatically smaller control set may appear to be an unnecessary limitation, we have shown for the case where all Lindblad operators are jump and de-phasing operators in a certain basis, almost the entire SLC set can be recovered from a set of $n!$ carefully chosen controls. 

The obvious limitation of this approach is that the control set is highly non-linear and thus it is difficult to attain analytic results. Its compactness however is an attractive feature, and so numerical work may pay dividends. A further drawback to using the analytic result for finite control sets is that the number of hypersurfaces that are candidates for $\partial\mathcal{A}$ grow extremely quickly: there are $n!$ possible $\sigma.\pi_\iota$ and thus the number of hypersurfaces is ${n! \choose n-1} \sim n!^n$. It is only practical for low-dimensional systems, and even for $n=4$, we must construct ${24\choose 3} = 2024$ surfaces (although symmetry makes many of these redundant). Nevertheless, if the Lindblad structure is simple (\emph{i.e.} only one Lindblad operator, or several jump operators), these complications may be mollified. Future work on an numerical extension of this approach is forthcoming.


\begin{acknowledgments}
P.R. has been supported by the National Science Foundation and the DFG grant HE 1858/13-1 from the German Research Foundation (DFG). A.M.B. is supported by the National Science Foundation and the Simons Foundation. C.R. is supported by the Natural Science and Engineering Research Council of Canada.
\end{acknowledgments}

\bibliographystyle{unsrt}
\bibliography{QCbiblio}

\newcommand{\noop}[1]{}
\begin{thebibliography}{10}

\bibitem{Feynman82}
R.~P. Feynman.
\newblock Simulating physics with computers.
\newblock {\em Int. J. Theo. Phys.}, 26(6):467, 1982.

\bibitem{NielsenChuangBook}
M.~A. Nielsen and I.~L. Chuang.
\newblock {\em Quantum Computation and Quantum Information}.
\newblock Cambridge University Press, 2000.

\bibitem{RanganBucksbaum01}
C.~Rangan and P.~H. Bucksbaum.
\newblock Optimally shaped terahertz pulses for phase retrieval in a
  {R}ydberg-atom data register.
\newblock {\em Phys. Rev. A}, 64:033417, 2001.

\bibitem{PalaoKosloff02}
J.~P. Palao and R.~Kosloff.
\newblock Quantum computing by an optimal control algorithm for unitary
  transformations.
\newblock {\em Phys. Rev. Lett.}, 89(18):188301, 2002.

\bibitem{ShapiroBrumer86}
M.~Shapiro and P.~Brumer.
\newblock Laser control of product quantum state populations in unimolecular
  reactions.
\newblock {\em J. Phys. Chem.}, 84(7):4103, 1986.

\bibitem{TannorRice85}
D.~J. Tannor and S.~A. Rice.
\newblock Control of selectivity of chemical reaction via control of wave
  packet evolution.
\newblock {\em J. Chem. Phys.}, 83(10):5013, 1985.

\bibitem{ErnstetalBook}
R.~R. Ernst, G.~Bodenhausen, and A.~Wokaun.
\newblock {\em Principles of Nuclear Magnetic Resonance in One and Two
  Dimensions}.
\newblock Clarendon, Oxford, 1987.

\bibitem{SontagBook}
E.~Sontag.
\newblock {\em Mathematical Control Theory}.
\newblock Springer-Verlag, 2002.

\bibitem{DAlessandroBook}
D.~D'Alessandro.
\newblock {\em Introduction to Quantum Control and Dynamics}.
\newblock Chapman \& Hall/CRC, 2008.

\bibitem{HuangTarnClark83}
G.~M. Huang, T.~J. Tarn, and J.~W. Clark.
\newblock On the controllability of quantum-mechanical systems.
\newblock {\em J. Math. Phys.}, 24(11):2608, 1983.

\bibitem{MabuchiKhaneja2005}
H.~Mabuchi and N.~Khaneja.
\newblock Principles and applications of control in quantum systems.
\newblock {\em Int J. Robust and Nonlinear Control}, 15:647 -- 667, 2005.

\bibitem{BrifChakrabartiRabitz2010}
Brif, Chakrabarti, and H.~Rabitz.
\newblock Control of quantum phenomena: past, present and future.
\newblock {\em New J. Phys.}, 12(5):075008, 2010.

\bibitem{DongPetersen2011}
D.~Dong and I.~Petersen.
\newblock Quantum control theory and applications: a survey.
\newblock {\em IET Control theory and applications}, 4(12):2651 --2671, 2011.

\bibitem{AltafiniTicozzi2012}
C.~Altafini and F.~Ticozzi.
\newblock Modeling and control of quantum systems: an introduction.
\newblock {\em IEEE Transactions on Automatic Control}, 57:1898 -- 1917, 2012.

\bibitem{Lindblad76}
G.~Lindblad.
\newblock On the generators of quantum dynamical semigroups.
\newblock {\em Comm. Math. Phys.}, 48:119, 1976.

\bibitem{GoriniKossakowskiSudarshan76}
V.~Gorini, A.~Kossakowski, and E.C.G. Sudarshan.
\newblock Completely positive dynamical semigroups of ${N}$-level systems.
\newblock {\em J. Math. Phys.}, 17(5):821, 1976.

\bibitem{BreuerPetruccioneBook}
H.-P. Breuer and F.~Petruccione.
\newblock {\em The Theory of Open Quantum Systems}.
\newblock Oxford University Press, 2007.

\bibitem{LloydViola01}
S.~Lloyd and L.~Viola.
\newblock Engineering quantum dynamics.
\newblock {\em Phys. Rev. A}, 65:010101, 2001.

\bibitem{Baconetal01}
D.~Bacon et~al.
\newblock Universal simulation of {M}arkovian quantum dynamics.
\newblock {\em Phys. Rev. A}, 64:062302, 2001.

\bibitem{Barreiroetal2011}
J.T. Barreiro et~al.
\newblock An open-system quantum simulator with trapped ions.
\newblock {\em Nature}, 470:486, 2011.

\bibitem{TannorBartana99}
D.~J. Tannor and A.~Bartana.
\newblock On the interplay of control fields and spontaneous emission in laser
  cooling.
\newblock {\em J. Phys. Chem. A}, 103:10359, 1999.

\bibitem{SklarzTannorKhaneja04}
S.~E. Sklarz, D.~J. Tannor, and N.~Khaneja.
\newblock Optimal control of quantum dissipative dynamics: Analytic solution
  for cooling the three-level ${\Lambda}$ system.
\newblock {\em Phys. Rev. A}, 69:053408, 2004.

\bibitem{Schirmeretal04}
S.~G. Schirmer, T.~Zhang, and J.V. Leahy.
\newblock Orbits of quantum states and geometry of {B}loch vectors for
  ${N}$-level systems.
\newblock {\em J. Phys. A}, 37:1389, 2004.

\bibitem{KhanejaGlaserBrockett02}
N.~Khaneja, S.~J. Glaser, and R.~Brockett.
\newblock Sub-{R}iemannian geometry and time optimal control of three spin
  systems: Quantum gates and coherence transfer.
\newblock {\em Phys. Rev. A}, 65:032301, 2002.

\bibitem{us_nis2_a}
P.~Rooney, A.M. Bloch, and C.~Rangan.
\newblock Decoherence control and purification of two-dimensional quantum
  density matrices under {L}indblad dissipation.
\newblock 2012.
\newblock arXiv:1201.0399v1 [quant-ph].

\bibitem{us_nis2_b}
P.~Rooney, A.M. Bloch, and C.~Rangan.
\newblock Flag-based control of quantum purity for $n=2$ systems.
\newblock 2016 (in press).

\bibitem{BengtssonZyczkowskiBook}
I.~Bengtsson and K.~Zyczkowski.
\newblock {\em Geometry of Quantum States}.
\newblock Cambridge University Press, 2006.

\bibitem{SchirmerWang2010}
S.~Schirmer and X.~Wang.
\newblock Stabilizing open quantum systems by markovian reservoir engineering.
\newblock {\em Physical Review A}, 81:062306, 2010.

\bibitem{BlochBrockettRangan10}
A.~M. Bloch, R.~W. Brockett, and C.~Rangan.
\newblock Finite controllability of infinite-dimensional quantum systems.
\newblock {\em IEEE Trans. Automatic Control.}, 55(8):1797, 2010.

\bibitem{MirrahimiRouchon09}
M.~Mirrahimi and P.~Rouchon.
\newblock Real-time synchronization feedbacks for single-atom frequency
  standards.
\newblock {\em SIAM J. Control Optim.}, 48:2820--2839, 2009.

\bibitem{JamesGough10}
M.R. James and J.E. Gough.
\newblock Quantum dissipative systems and feedback control design by
  interconnection.
\newblock {\em IEEE Trans. Automatic Control}, 55:1806--1821, 2010.

\bibitem{BoutenVanHandelJames09}
L.~Bouten, R.~Van Handel, and M.~R. James.
\newblock A discrete invitation to quantum filtering.
\newblock 2006.
\newblock arXiv: quant-ph/0601741v1.

\bibitem{KatoBook}
T.~Kato.
\newblock {\em Perturbation Theory for Linear Operators}.
\newblock Springer-Verlag, 1980.

\bibitem{Schur}
I.~Schur.
\newblock \"{U}ber eine {K}lasse von {M}ittelbildungen mit {A}nwendungen auf
  die {D}eterminantentheorie.
\newblock {\em Sitzungsber. Berl. Math. Ges.}, 22:9, 1923.

\bibitem{Horn}
A.~Horn.
\newblock Doubly stochastic matrices and the diagonal of a rotation matrix.
\newblock {\em Am. J. Math.}, 76:620, 1954.

\bibitem{Cara1911}
C.~Carath\'{e}odory.
\newblock \"{U}ber den {V}ariabilit\"{a}tsbereich der {F}ourierschen
  {K}onstanten von positiven harmonischen {F}unktionen.
\newblock {\em Rendiconti del Circolo Matematico di Palermo}, 32:193, 1911.

\end{thebibliography}

\end{document}